\newtheorem{theorem}{Theorem}
\newtheorem{lemma}{Lemma}
\begin{document}

\title{Blind Signal Detection in Massive MIMO: Exploiting the Channel Sparsity}
\author{Jianwen~Zhang,
		Xiaojun~Yuan,~\IEEEmembership{Senior Member,~IEEE},
		and Ying Jun (Angela) Zhang,~\IEEEmembership{Senior Member,~IEEE}
\thanks{X. Yuan is with the National Key Laboratory of Science and Technology on Communications, University of Electronic Science and Technology of China, Chengdu, China. Email: xjyuan@uestc.edu.cn.}
\thanks{Y. Zhang is with the Department of Information Engineering, the Chinese University of Hong Kong, Shatin, New Territories, Hong Kong. Email: yjzhang@ie.cuhk.edu.hk.}
		}

\maketitle

\begin{abstract}
	 In practical massive MIMO systems, a substantial portion of system resources are consumed to acquire channel state information (CSI), leading to a drastically lower system capacity compared with the ideal case where perfect CSI is available. In this paper, we show that the overhead for CSI acquisition can be largely compensated by the potential gain due to the sparsity of the massive MIMO channel in a certain transformed domain. To this end, we propose a novel blind detection scheme that simultaneously estimates the channel and data by factorizing the received signal matrix. We show that by exploiting the channel sparsity, our proposed scheme can achieve a DoF very close to the ideal case, provided that the channel is sufficiently sparse. Specifically, the achievable degree of freedom (DoF) has a fractional gap of only $1/T$ from the ideal DoF, where $T$ is the channel coherence time. This is a remarkable advance for understanding the performance limit of the massive MIMO system. We further show that the performance advantage of our proposed scheme in the asymptotic SNR regime carries over to the practical SNR regime. Numerical results demonstrate that our proposed scheme significantly outperforms its counterpart schemes in the practical SNR regime under various system configurations.
\end{abstract}

\begin{IEEEkeywords}
	Massive MIMO, blind signal detection, channel sparsity, degrees of freedom (DoF), matrix factorization, message passing
\end{IEEEkeywords}
\IEEEpeerreviewmaketitle

\section{Introduction}
	\IEEEPARstart{M}{assive} multiple-input multiple-output (MIMO) systems have been extensively studied in the past decade for its advantages on boosting the system throughput, improving the link reliability, and enhancing the energy efficiency \cite{Larsson14,Marzetta10,Ngo13,Hoydis13,Geraci13,Fang16}. Consider a massive MIMO system consisting of $K$ single-antenna transmit terminals and a receive terminal with $N$ antennas, under the massive MIMO assumption of $N \gg K \gg 1 $. This setting arises in machine type communication scenarios as specified by 5G system requirements \cite{ITU,Wang}, where many low-complexity terminals (devices) with a single antenna need to communicate with a powerful base station with an array of multiple antennas. A fundamental problem for massive MIMO is to determine the system capacity. It is well known that ideally, when the MIMO channel matrix is perfectly known to the receiver, the capacity of the system scales as $K\log(\mathsf{SNR})$ at high SNR, i.e., the degrees of freedom (DoF) of the system is $K$. However, in practical systems, the acquisition of channel state information (CSI) consumes a substantial amount of system resource. The system capacity of the ideal case is therefore difficult to achieve from a practical viewpoint. In fact, as the MIMO size becomes large, the system overhead spent on CSI acquisition increases and eventually becomes the bottleneck to increase the system capacity.
	
	There are two canonical research directions for channel acquisition. The first direction is referred to as the training-based approach, in which each transmission frame is divided into two phases, namely, the training phase and the data transmission phase \cite{Coldrey07,Yuan16}. In the training phase, the transmitters transmit pilot signals and the receiver estimates the channel coefficients based on the knowledge of the pilot signals. In the data transmission phase, the transmitters transmit data, and the receiver detects the data based on the estimated channel. In the training-based approach, a pilot length of no less than $K$ is required to probe the channel with a vanishing estimation error \cite{Yuan16,Hassibi03}. This leads to a DoF of $K(1-\frac{K}{T})$, where $T$ is the channel coherence time, and the DoF loss compared to the ideal case is due to the fact that no information is carried by the pilot signals. To avoid the training overhead, another line of research works on blind detection, in which the receiver estimates the channel and detects data without any prior knowledge of the signals from the transmitters \cite{Muquet02,Ngo12,Zheng02}. However, as the gain from no training overhead is largely compromised by the reduction of detection accuracy due to channel uncertainty, blind detection achieves the same DoF as the training-based approaches, i.e., the DoF for blind detection is still given by $K(1-\frac{K}{T})$ \cite{Zheng02}.
	
	The aforementioned approaches assume a rich-scattering multipath environment and so the channel coefficients can be modelled as random variables satisfying a certain continuous distribution. This assumption, however, is questionable in massive MIMO systems. More and more analyses and experimental evidences demonstrate that the physical channel of a massive MIMO system exhibits a sparse structure in the angular domain of the receive antenna array \cite{Sayeed07,Sayeed02,Samimi16,Vuokko07,Czink07}, i.e., the channel coefficient matrix has many zero or near-zero elements. The reason is two-fold. On one hand, a growing demand for bandwidth increases radio frequency and reduces wavelength, while an electromagnetic wave with a shorter wavelength is more likely to be blocked by obstructions. As a result, there will be fewer propagation paths in the channel for next-generation wireless communications. On the other hand, with the deployment of large-scale antenna arrays, the resolution bin in the angular domain becomes much finer than ever before. This enables the receiver to distinguish the angles of arrival for different paths with a much higher resolution.
	
	The channel sparsity can be exploited to enhance the performance of a massive MIMO system \cite{Yin13,Bajwa10,Lau14,Masood15,Muller14, Mezghani16}. For example, in training-based massive MIMO systems, compressed sensing was used to reduce the number of required pilot signals by exploiting the channel sparsity \cite{Bajwa10,Lau14,Masood15}. It has been shown that compressed-sensing based training schemes can achieve a DoF of $K(1-c\frac{K}{T})$, where $c$ is usually a coefficient between 0 and 1 depending on the channel sparsity level. Moreover, the channel sparsity has been utilized in blind channel estimation \cite{Muller14, Mezghani16}. The basic idea is to approximately calculate the receive covariance matrix using the received signal, and then to estimate the channel matrix by factorizing the approximate covariance matrix based on the sparsity of the channel matrix. Afterwards, the data is detected based on the estimated channel. The above blind channel estimation scheme has the benefit of avoiding the pilot overhead. However, to obtain a relatively accurate estimate of the receive covariance matrix, the coherence time $T$ of the channel is required to satisfy $T \gg N $, which is unfortunately difficult to realize in a massive MIMO system. Thus, the performance of the blind channel estimation scheme is quite poor for a massive MIMO system, especially in a block-fading environment with a relatively short coherence time $T$.
	
	In this paper, we investigate the impact of the channel sparsity on the fundamental performance limit of a massive MIMO system. Specifically, \emph{we propose a novel blind massive MIMO detection scheme that simultaneously estimates the channel and detect the signal from the received signal by exploiting the channel sparsity}. Unlike the blind channel estimation scheme in \cite{Mezghani16}, our proposed blind detection scheme does not rely on an accurate estimation of the receive covariance matrix, and can work well even when $T < N$. \emph{We show that, with the channel sparsity and under some regularity conditions, our scheme can achieve a DoF arbitrarily close to $K(1-\frac{1}{T})$ for a sufficiently large $N$ and the channel is sufficiently sparse}. This implies a huge throughput improvement of the massive MIMO system over the existing approaches \cite{Bajwa10,Lau14,Masood15,Muller14, Mezghani16} in the high SNR regime. In addition, the DoF of our scheme is very close to the ideal DoF of $K$, implying that the adverse effect of channel uncertainty can be largely compensated by the potential gain due to the channel sparsity.
	
	We further consider the algorithm design for the blind detection scheme to achieve the potential gain of the channel sparsity in the practical SNR regime. We point out that the blind signal detection problem under concern is related to dictionary learning \cite{Sun17} and sparse matrix factorization \cite{Koren09}. Specifically, the joint estimation of the channel and the data based on the received signal can be formulated as a sparse matrix factorization problem. This problem is non-convex and so is difficult to find an optimal solution. There exist a number of approximate solutions in the literature, such as the K-SVD algorithm \cite{KSVD06}, the SPAMS algorithm \cite{SPAMS10}, the ER-SpUD algorithm \cite{Spielman12}, and the bilinear generalized approximate message passing (BiG-AMP) algorithm \cite{PSchniter14}. Among these algorithms, BiG-AMP is known to have the best performance in general. However, we show that BiG-AMP does not work well for $T > K$, which is a typical setting in a massive MIMO system, when BiG-AMP is directly applied to our blind detection problem. To address this issue, \emph{we propose a projection-based BiG-AMP (P-BiG-AMP) algorithm}, in which the subspace occupied by the signal is estimated in the first place, and then BiG-AMP is applied to factorize the image of the received signal projected onto the estimated signal subspace. \emph{Numerical results demonstrate that our proposed blind detection scheme with P-BiG-AMP significantly outperforms the counterpart schemes in the practical SNR regime under various configurations of $N$, $K$, and $T$}.

\subsection{Organization}	
	The remainder of this paper is organized as follows. In Section II, we introduce a sparse channel model for the massive MIMO system. In Section III, we present upper and lower bounds of the capacity of the massive MIMO system. In Section IV, we analyze the DoF of the proposed scheme. In Section V, we develop a message-passing based detection algorithm to jointly detect the data of users and the channel matrix. Numerical results are presented in Section VI to verify the effectiveness of our proposed scheme. Finally, we conclude the paper in Section VII.

\subsection{Notation}
	Regular letters, lowercase bold letters, and capital bold letters represent scalars, vectors, and matrices, respectively. $\mathbb{C}$ denotes the complex field; the superscripts $(\cdot)^\text{H}$, $(\cdot)^*$, $(\cdot)^\text{T}$, and $(\cdot)^{-1}$ represent the conjugate transpose, conjugate, transpose, and the inverse of a matrix, respectively; $|\cdot|,\|\cdot\|_1, \|\cdot\|_2$, and $\|\cdot\|_F$ represent the absolute value, the $\ell_1$-norm, the $\ell_2$-norm, and the Frobenius norm, respectively; $\mathsf{E}[\cdot]$, $\det(\cdot)$, and $\log(\cdot)$ represents the expectation, the determinant, and the logarithm function; $\text{diag}\{\mathbf{a} \}$ represents the diagonal matrix with the diagonal specified by $\mathbf{a}$; $\lceil a \rceil$ represents the minimum integer larger than $a$. For an integer $N$, $\mathcal{I}_N$ denotes the set of integers from $1$ to $N$. The notation $\propto$ denotes equality up to a constant scaling factor. The notation $a \lesssim b$ means $\limsup_{b\rightarrow\infty} \frac{a}{b} \leq c$, where $c>0$ is a constant. Similarly, $a \gtrsim b$ means $\limsup_{a\rightarrow\infty} \frac{b}{a} \leq c$.

\section{System Model}
\subsection{Sparse Channel Modeling}
	We now present a sparse channel model for massive MIMO systems by following the approach in \cite{Bajwa10}. Consider a massive MIMO channel with $K$ single-antenna transmitters and a receiver deployed with uniform linear array (ULA) of $N$ antennas. By massive MIMO, we assume $N \gg K \gg 1 $. This setting arises in practical scenarios, e.g., when a base station that deploys an array of a few hundred antennas communicates with tens of users. Let $L_r$ be the normalized length of the ULA.\footnote{This means, the actual length of the ULA is $\lambda_c L_r$, where $\lambda_c$ is the wavelength.} Then, the normalized interval between any two adjacent receive antennas is $\Delta_r = L_r/N$. Denote by $L_k$ the number of physical paths between transmitter $k$ and the receiver, by $\alpha_{l,k}$ the path gain of the $l$th path of transmit $k$, and by $\theta_{l,k}$ the AoA of the $l$th path of transmitter $k$. Then, the physical channel of a nonselective MIMO channel from transmitter $k$ to the receiver can be modeled by
	\begin{equation}\label{equ:phychn}
		\tilde{\mathbf{h}}_k = \sum_{l = 1}^{L_k} \alpha_{l,k} \mathbf{a}_{r} (\theta_{l,k}),
	\end{equation}
	where 
	\begin{align}
		\mathbf{a}_r(\theta_{l,k}) = \frac{1}{\sqrt{N}} \begin{bmatrix}
															1\\
															\exp(-\mathsf{j}2\pi\Delta_r \cos \theta_{l,k} )\\
															\vdots\\
															\exp(-\mathsf{j}2\pi(N-1)\Delta_r \cos \theta_{l,k})
														\end{bmatrix}
	\end{align}
	represents the array steering vector for receiving a signal from transmitter $k$ in the direction given by $\theta_{k,l}$, and $\mathsf{j} = \sqrt{-1}$. By using the virtual representation method in \cite{Bajwa10}, we can rewrite \eqref{equ:phychn} as
	\begin{equation}
		 \tilde{\mathbf{h}}_k =  \sum_{n = 0}^{N-1} h_{n,k} \mathbf{a}_{r} \left(\arccos \frac{n}{L_r} \right)   = \mathbf{A}_r \mathbf{h}_k, \label{equ:H_tield}
	\end{equation} 
	where $\mathbf{A}_r = \left[\mathbf{a}_{r} \left(\arccos \frac{0}{L_r} \right),\ldots,\mathbf{a}_{r} \left(\arccos \frac{N-1}{L_r} \right) \right] \in \mathbb{C}^{N \times N}$ is a unitary matrix, and $\mathbf{h}_k$ is the $k$th column of $\mathbf{H}$ in \eqref{equ:System1}. From \eqref{equ:H_tield}, $\mathbf{h}_k = [h_{1,k}, \cdots, h_{N,k}]^\text{T}$ can be treated as an equivalent channel of user $k$ in the angular domain, where $h_{n,k}$ is the aggregated gain of the physical paths of user $k$ within the resolution bin centered around $\arccos \frac{n}{L_r}$ in the angular domain. Denote by $\tilde{\mathbf{H}} = [\tilde{\mathbf{h}}_1, \cdots, \tilde{\mathbf{h}}_K]$ the overall channel matrix. Then
	\begin{equation}
		\tilde{\mathbf{H}} = \mathbf{A}_r \mathbf{H},
	\end{equation}
	where $\mathbf{H} = [\mathbf{h}_1, \cdots, \mathbf{h}_K]$ is the projection of the channel in the angular domain. 
	
	We now describe the sparsity of $\mathbf{H}$ for massive MIMO systems. Recall that each $h_{n,k}$ is the aggregated channel gain of the physical paths of transmit terminal $k$ whose AoAs are within the $n$th resolution bin in the angular domain. It has been previously discussed in \cite{Sayeed07,Sayeed02,Vuokko07,Czink07} that a large portion of the elements of $\mathbf{H}$ are very close to zero, since the number of resolution bins of a massive antenna array usually far exceeds the number of physical paths of each user $k$, i.e., $N \gg L_k$. An example is illustrated in Fig. \ref{Fig:ChnSpasity}. These near-zero elements of $\mathbf{H}$ correspond to weak channel links, and can be essentially ignored in the transceiver design of a massive MIMO system. 
	
	Based on the above discussions, we henceforth assume that the massive MIMO channel is sparse, i.e., the sparsity level $\rho$ satisfies
	\begin{equation}\label{equ:sparsity_level}
		\rho = \frac{|\mathcal{S}| }{NK} < 1,
	\end{equation}
	where $\mathcal{S}$ is the support of the non-zero elements of $\mathbf{H}$, i.e.
	\begin{equation}
		h_{n,k} = 0,\ \ \text{for} \ (n, k)\notin \mathcal{S},
	\end{equation}
	and $|\mathcal{S}|$ represents the cardinality of the set $\mathcal{S}$.
	
	\begin{figure}[!t]
		\centering
		\includegraphics[width = 0.7\textwidth]{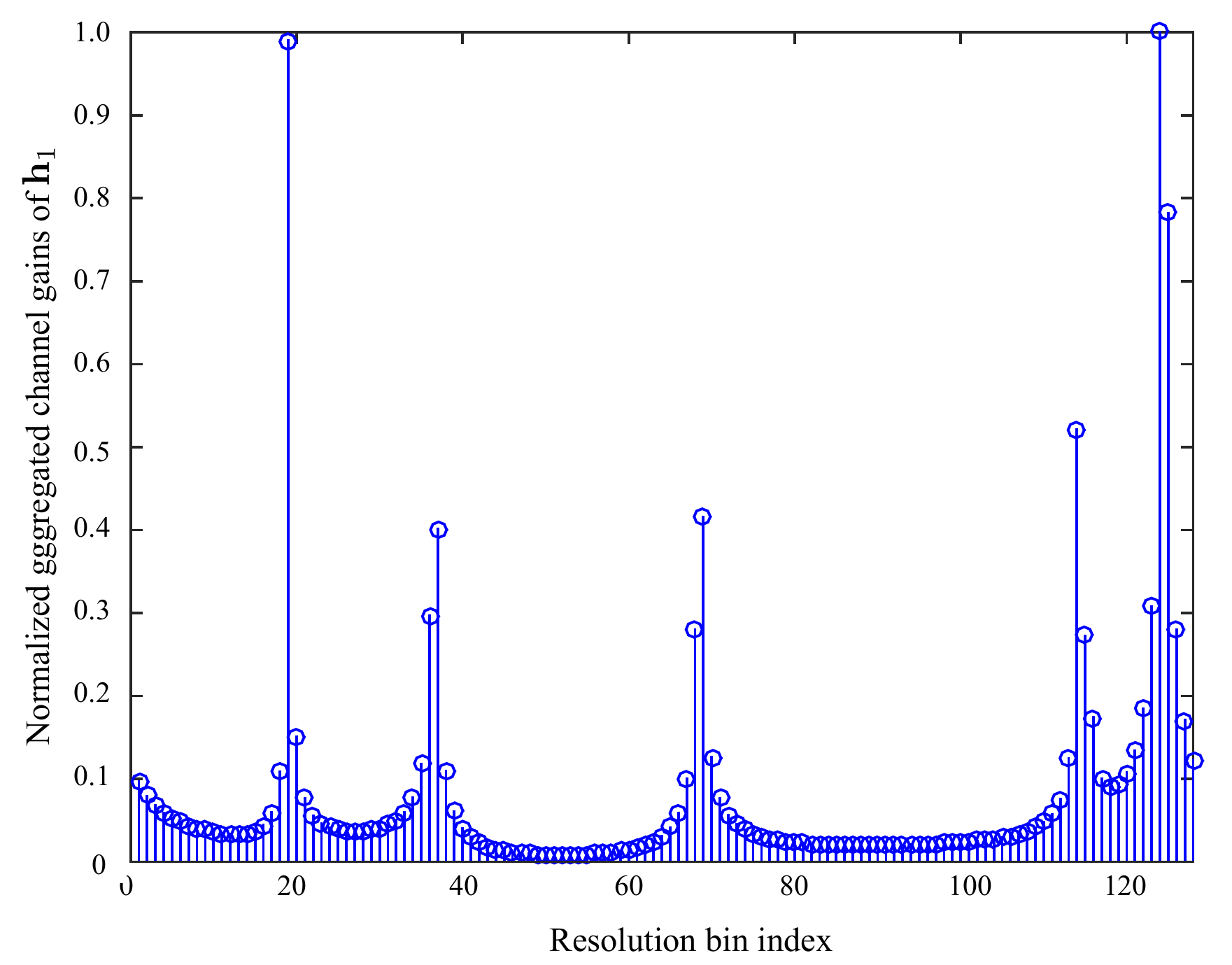}
	 	\caption{An example to illustrate the channel sparsity of transmitter 1 in the angular domain. The number of received antenna is $N = 128$. Half-wavelength separation is assumed between any two of the receive antennas, i.e., $\Delta_r = 0.5$. The number of physical paths is $L_1 = 6$. The path gains $\{\alpha_{l,k}\}$ are drawn from the circularly symmetric complex Gaussian distribution with zero mean and unit variance. The AoAs $\{\theta_{l,k}\}$ are randomly and uniformly distributed between $0$ and $\pi$. Particularly, in the figure, $[\alpha_{1,1}, \alpha_{2,1}, \alpha_{3,1}, \alpha_{4,1}, \alpha_{5,1}, \alpha_{6,1}] = [0.2963+\mathsf{j}0.5249,\ -0.1011-\mathsf{j}1.5287,\ -0.4555+\mathsf{j}1.0179,\ -0.5989+\mathsf{j}0.0405,\ 0.4550-\mathsf{j}0.4741,\ -0.0022+\mathsf{j}0.2496]$, and $[\theta_{1,1},\theta_{2,1},\theta_{3,1},\theta_{4,1},\theta_{5,1},\theta_{6,1}] = [0.3123\pi,\ 0.5227\pi,\ 0.4086\pi,\ 0.8929\pi,\ 0.5738\pi,\ 0.5679\pi]$. The maximum value of $|h_{1j}|$ is normalized to 1. }\label{Fig:ChnSpasity}
	\end{figure}
	
\subsection{Signal Model}
	The signal model of the massive MIMO system is presented as follows. The channel is assumed to be block-fading, i.e., the channel remains unchanged within the coherence time $T$. Then, for each transmission block of duration $T$, the received signal matrix is represented by
	\begin{equation} \label{equ:Y_tield}
		\tilde{\mathbf{Y}} = \tilde{\mathbf{H}} \mathbf{X} + \tilde{\mathbf{W}} 
		                   = \mathbf{A}_r \mathbf{H} \mathbf{X} +  \tilde{\mathbf{W}},
	\end{equation} 
	where $\tilde{\mathbf{Y}} \in \mathbb{C}^{N\times T}$ is the received signal over $T$ time slots, $\mathbf{X} \in \mathbb{C}^{K\times T}$ is the transmitted signal from all the $K$ transmit terminals, and $\tilde{\mathbf{W}} \in \mathbb{C}^{N\times T} $ is the additive white Gaussian noise with each element independently drawn from $\mathcal{CN}(0,\sigma^2)$. 
	
	By left-multiplying $\tilde{\mathbf{Y}}$ with $\mathbf{A}_r^\text{H}$, we obtain  the projection of the received signal in the angular domain as 
	\begin{equation} \label{equ:System1}
		\mathbf{Y} = \mathbf{HX} + \mathbf{W},
	\end{equation}
	where $\mathbf{Y} =\tilde{\mathbf{Y}}\mathbf{A}_r^\text{H} \in \mathbb{C}^{N\times T}$, and $\mathbf{W}=\tilde{\mathbf{W}}\mathbf{A}_r^\text{H} \in \mathbb{C}^{N\times T}$. Note that the elements of $\mathbf{W}$ are still independent and identically distributed Gaussian random variables with zero mean and variance $\sigma^2$. Let $\mathbf{x}_k \in \mathbb{C}^{T\times 1} $ be the transpose of the $k$th row of $\mathbf{X}$. We assume that the average transmission power of each transmit terminal is given by $\alpha_k P$, i.e.
	\begin{equation}\label{equ:power_constraint}
		\frac{1}{T}\mathsf{E}[\mathbf{x}_k^\text{H} \mathbf{x}_k] \leq \alpha_k P, \text{ for all } k \in \mathcal{I}_K \triangleq \{1,2,\cdots, K\},
	\end{equation}
	where $\alpha_k \geq 0$ for $k \in \mathcal{I}_K$ satisfy $\sum_{k=1}^K \alpha_k = 1$, and $P$ is the total power budget.
	
	The system capacity of a massive MIMO system is given by
	\begin{equation}\label{equ:capacity}
		C(\mathsf{SNR}) = \frac{1}{T} \max_{p_{\mathbf{X}}(\mathbf{X}): \frac{1}{T}\mathsf{E}[\mathbf{x}_k^\text{H} \mathbf{x}_k] \leq \alpha_k P, k\in \mathcal{I}_K} I(\mathbf{X};\mathbf{Y}),
	\end{equation}
	where the signal-to-noise ratio (SNR) is defined by $\mathsf{SNR} = \frac{P}{\sigma^2}$ and the maximization is taken over the distribution of $\mathbf{X}$, denoted by $p_{\mathbf{X}}(\mathbf{X})$, subject to the power constraint in \eqref{equ:power_constraint}.
	
\section{Capacity Bounds}	
The system capacity in \eqref{equ:capacity} is a very difficult problem and the exact solution is still unknown, especially in the circumstance with channel sparsity. In this section, we present upper and lower bounds to describe the system capacity. As seen later, the main contribution of this paper is to propose a blind detection scheme that provides a tight lower bound for the system capacity.

\subsection{Capacity Upper Bound}
	We start with a capacity upper bound. Ideally, when the CSI is perfectly known at the receiver, the capacity of the channel \eqref{equ:System1} is given by the following theorem \cite{Telatar99}.
	
	\begin{lemma}\label{lemma1}
		Assume that the channel matrix $\mathbf{H}$ is known at the receive terminal. Then, the channel capacity of the system in \eqref{equ:System1} is given by 
		\begin{equation} \label{equ:coh_capa}
			C_{\text{ideal}}(\mathsf{SNR}) = \mathsf{E} \left[\log\det(\mathbf{I}_K + \mathsf{SNR}\cdot  \mathbf{\Lambda} \mathbf{H}^\text{\emph{H}} \mathbf{H} ) \right],
		\end{equation}
		where $\mathbf{\Lambda} = \text{diag}\{\pmb{\alpha}\}$ with $ \pmb{\alpha} = [\alpha_1, \alpha_2, \cdots, \alpha_K]^\text{\emph{T}}$. The corresponding $\mathsf{DoF}$ is given by 
		\begin{equation} \label{equ:coh_DoF}
			\mathsf{DoF}_{\text{ideal}} =  \lim_{\mathsf{SNR}\rightarrow \infty} \frac{C_{\text{ideal}}(\mathsf{SNR})}{\log \mathsf{(SNR)}} = K.
		\end{equation}
	\end{lemma}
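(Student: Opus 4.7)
The plan is to adapt Telatar's coherent MIMO capacity argument to the present multiuser uplink setting. Since $\mathbf{H}$ is perfectly known at the receiver and is statistically independent of $\mathbf{X}$, the effective channel is coherent and $I(\mathbf{X};\mathbf{Y},\mathbf{H}) = I(\mathbf{X};\mathbf{Y}\mid\mathbf{H})$. Within a coherence block of length $T$ the matrix $\mathbf{H}$ is fixed and the noise is i.i.d.\ across time, so for a memoryless Gaussian input $I(\mathbf{X};\mathbf{Y}\mid\mathbf{H}) = T\, I(\mathbf{x}_t;\mathbf{y}_t\mid\mathbf{H})$; the $1/T$ in \eqref{equ:capacity} therefore collapses to a single-letter expectation over $\mathbf{H}$.

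Next I would maximize the single-letter mutual information $I(\mathbf{x}_t;\mathbf{y}_t\mid\mathbf{H})$. The key structural point of this model is that the $K$ terminals are physically separate single-antenna devices, which forces the per-slot input $\mathbf{x}_t$ to have mutually independent components; equivalently, the input covariance $\mathbf{K}_\mathbf{x} = \mathsf{E}[\mathbf{x}_t\mathbf{x}_t^\text{H}]$ is diagonal with $[\mathbf{K}_\mathbf{x}]_{kk}\le \alpha_k P$. Writing $I(\mathbf{x};\mathbf{y}\mid\mathbf{H}) = h(\mathbf{y}\mid\mathbf{H}) - h(\mathbf{w})$ and applying Gaussian maximum entropy bounds $h(\mathbf{y}\mid\mathbf{H})$ by $\log\det\bigl(\pi e\,(\sigma^2\mathbf{I}_N + \mathbf{H}\mathbf{K}_\mathbf{x}\mathbf{H}^\text{H})\bigr)$, with equality for circularly symmetric complex Gaussian inputs. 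Because $\log\det(\mathbf{I}_N + \sigma^{-2}\mathbf{H}\mathbf{K}_\mathbf{x}\mathbf{H}^\text{H})$ is monotone in $\mathbf{K}_\mathbf{x}$ in the positive semi-definite order, the bound is maximized at $\mathbf{K}_\mathbf{x}=P\mathbf{\Lambda}$. A Sylvester-determinant rearrangement $\det(\mathbf{I}_N + \mathbf{A}\mathbf{B})=\det(\mathbf{I}_K + \mathbf{B}\mathbf{A})$ then converts this into the $K\times K$ form of \eqref{equ:coh_capa}.

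For the degrees-of-freedom claim I would factor $\mathsf{SNR}$ out of the log-determinant. Setting $\mathbf{M}(\mathbf{H}) = \mathbf{\Lambda}\mathbf{H}^\text{H}\mathbf{H}$, under the sparse channel model $\mathbf{H}$ has full column rank with probability one whenever $N\ge K$, so $\mathbf{M}$ is almost surely nonsingular. Then, as $\mathsf{SNR}\to\infty$,
\begin{equation*}
\log\det(\mathbf{I}_K + \mathsf{SNR}\,\mathbf{M}) = K\log\mathsf{SNR} + \log\det\mathbf{M} + o(1).
\end{equation*}
Dividing by $\log\mathsf{SNR}$, taking the limit, and passing it inside the expectation via dominated convergence gives $\mathsf{DoF}_{\text{ideal}} = K$. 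The main obstacle is really the independence-across-users restriction, since it is the only non-classical ingredient: it is what justifies taking $\mathbf{K}_\mathbf{x}$ diagonal rather than optimizing over all positive semi-definite matrices with constrained diagonal. The remaining work is the standard Telatar calculation, together with a routine check that the negative part of $\log\det\mathbf{M}$ is integrable so that the dominated convergence step is valid.
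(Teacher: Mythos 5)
The paper offers no proof of Lemma~\ref{lemma1} at all --- it is quoted directly from Telatar's coherent MIMO capacity result \cite{Telatar99} --- and your argument is precisely the standard derivation behind that citation, correctly adapted to the multiuser setting with per-terminal power constraints (independent users forcing a diagonal input covariance, Gaussian maximum entropy, the Sylvester identity, and the high-SNR expansion of the log-determinant). Your proof is sound; the only point worth making explicit is that the DoF step additionally requires every $\alpha_k>0$ and $\mathbf{H}$ to have full column rank almost surely, which holds generically for $N\gg K$ under the paper's sparse channel assumptions but is stated nowhere in the lemma itself.
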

	Lemma \ref{lemma1} gives a performance upper bound for the considered massive MIMO system. We will show that, with channel sparsity, the ideal capacity can be closely approached, especially in the high SNR regime. 
	
\subsection{Capacity Lower Bounds}
	We now introduce lower bounds to the system capacity in \eqref{equ:capacity}. In general, every realizable detection scheme for the system in \eqref{equ:System1} provides a lower bound to the system capacity in \eqref{equ:capacity}. As aforementioned, all the existing schemes for massive MIMO perform very far away from the ideal capacity given in Lemma \ref{lemma1}, or in other words, the lower bounds provided by these schemes are very loose.
	
	In this paper, we propose a blind detection scheme to directly estimate $\mathbf{H}$ and $\mathbf{X}$ from the observed signal matrix $\mathbf{Y}$ following the maximum \emph{a posteriori} probability (MAP) principle. From the probability theory, the joint posterior probability density of $\mathbf{H}$ and $\mathbf{X}$ given $\mathbf{Y}$ is given by  
	\begin{align}
		p_{\mathbf{H},\mathbf{X}|\mathbf{Y}}(\mathbf{H},\mathbf{X}|\mathbf{Y})
		& \overset{(a)}{=} \frac{1}{p_{\mathbf{Y}} (\mathbf{Y})} p_{\mathbf{Y}| \mathbf{H},\mathbf{X}}(\mathbf{Y}|\mathbf{H},\mathbf{X}) p_{\mathbf{X}} (\mathbf{X}) p_{\mathbf{H}} (\mathbf{H}) \nonumber\\
		& \overset{(b)}{\propto} p_{\mathbf{Y}| \mathbf{H},\mathbf{X}}(\mathbf{Y}|\mathbf{H}\mathbf{X}) p_{\mathbf{X}} (\mathbf{X}) p_{\mathbf{H}} (\mathbf{H})\nonumber\\
		& \overset{(c)}{=} p_{\mathbf{W}}(\mathbf{Y}-\mathbf{H}\mathbf{X}) p_{\mathbf{X}} (\mathbf{X}) p_{\mathbf{H}} (\mathbf{H})\nonumber\\
		& \overset{(d)}{\propto} \exp\Big(-\frac{1}{\sigma^2}\| \mathbf{Y}-\mathbf{H}\mathbf{X} \|_2^2\Big) p_{\mathbf{X}} (\mathbf{X}) p_{\mathbf{H}} (\mathbf{H})
	\end{align}
	where step $(a)$ follows from the Bayes' rule and the fact that $\mathbf{H}$ and $\mathbf{X}$ are independent; the notation $\propto$ in step $(b)$ denotes equality up to a constant scaling factor; $p_{\mathbf{W}}(\cdot)$ in step $(c)$ denotes the probability density function (PDF) of noise $\mathbf{W}$; step $(d)$ follows from the fact that the elements of $\mathbf{W}$ are independently drawn from $\mathcal{CN}(0,\sigma^2)$. Then, the MAP estimates of $\mathbf{H}$ and $\mathbf{X}$, denoted respectively by $\hat{\mathbf{H}}$ and $\hat{\mathbf{X}}$, are given by 
	\begin{align}
		(\hat{\mathbf{H}}, \hat{\mathbf{X}}) &= \arg\max_{\mathbf{H}, \mathbf{X}: \frac{1}{T}\mathsf{E}[\mathbf{x}_k^\text{H} \mathbf{x}_k] \leq \alpha_k P, k\in \mathcal{I}_K } p_{\mathbf{H},\mathbf{X}|\mathbf{Y}}(\mathbf{H},\mathbf{X}|\mathbf{Y}) \nonumber\\
		&= \arg\max_{\mathbf{H}, \mathbf{X}: \frac{1}{T}\mathsf{E}[\mathbf{x}_k^\text{H} \mathbf{x}_k] \leq \alpha_k P, k\in \mathcal{I}_K } \exp\Big(-\frac{1}{\sigma^2}\| \mathbf{Y}-\mathbf{H}\mathbf{X} \|_2^2\Big) p_{\mathbf{X}} (\mathbf{X}) p_{\mathbf{H}} (\mathbf{H}).\label{equ:problem2}
	\end{align}
	The mutual information between $\hat{\mathbf{X}}$ and $\mathbf{X}$, denoted by $I(\hat{\mathbf{X}}; \mathbf{X})$, provides a lower bound to the capacity in \eqref{equ:capacity}. The corresponding DoF lower bound is given by 
	\begin{equation}\label{equ:dof2}
		\mathsf{DoF} =  \lim_{\mathsf{SNR}\rightarrow \infty} \frac{I(\hat{\mathbf{X}}; \mathbf{X})}{\log \mathsf{(SNR)}}.
	\end{equation}
	The problem in \eqref{equ:problem2} is in general difficult to solve, since the observed signal $\mathbf{Y}$ is a bilinear function of $\mathbf{H}$ and $\mathbf{X}$. Later, we will present a message-passing based algorithm to approximately solve \eqref{equ:problem2}, and will show that the lower bound provided by \eqref{equ:problem2} is much tighter than the existing bounds.
	
	To reveal the fundamental impact of channel sparsity on the DoF of a massive MIMO channel, we simplify \eqref{equ:problem2} by ignoring $p_{\mathbf{X}} (\mathbf{X})$ (which is irrelevant to the channel sparsity) and replacing $p_{\mathbf{H}} (\mathbf{H})$ with a Laplace distribution $\exp\Big(-\frac{\lambda}{\sigma^2}\|\mathbf{H}\|_1 \Big)$ (by following \cite{Donoho10,Sun17,SLearning}), yielding\footnote{In \eqref{equ:problem3}, the $\ell_1$-norm of $\mathbf{H}$ is used as the penalty term in the optimization problem. This is widely used in the area of machine learning to find a sparse solution \cite{SLearning}. Note that $\lambda \|\mathbf{H}\|_1$ in \eqref{equ:problem3b} implies a Laplace prior of $\mathbf{H}$. This explains why we choose a Laplace distribution for $p_{\mathbf{H}} (\mathbf{H})$ in \eqref{equ:problem3a}. }
	\begin{subequations}\label{equ:problem3}
	\begin{align}
		(\hat{\mathbf{H}}, \hat{\mathbf{X}}) &= \arg\max_{\mathbf{H}, \mathbf{X}: \frac{1}{T}\mathsf{E}[\mathbf{x}_k^\text{H} \mathbf{x}_k] \leq \alpha_k P, k\in \mathcal{I}_K } \exp\Big(-\frac{1}{\sigma^2}\| \mathbf{Y}-\mathbf{H}\mathbf{X} \|_2^2\Big) \exp\Big(-\frac{\lambda}{\sigma^2}\|\mathbf{H}\|_1 \Big)\label{equ:problem3a}\\
		&= \arg\min_{\mathbf{H}, \mathbf{X}: \frac{1}{T}\mathsf{E}[\mathbf{x}_k^\text{H} \mathbf{x}_k] \leq \alpha_k P, k\in \mathcal{I}_K } \| \mathbf{Y}-\mathbf{H}\mathbf{X} \|_2^2 + \lambda \|\mathbf{H}\|_1 \label{equ:problem3b}
	\end{align}	
	\end{subequations}
	where $\lambda$ is a regularization parameter that controls the tradeoff between the channel sparsity and the signal detection quality. Note that $l_1$-norm minimization is a common practice to deal with signal sparsity in the area of sparse signal recovery \cite{Sun17,Spielman12,Candes09,Donoho10}.
	
	Similarly to \eqref{equ:problem2}, $I(\hat{\mathbf{X}}; \mathbf{X})$ with $\hat{\mathbf{X}}$ given by \eqref{equ:problem3} gives a lower bound to the capacity in \eqref{equ:capacity} with corresponding achievable DoF given by \eqref{equ:dof2}. In the next section, we will analyze the DoF of the massive MIMO system based on the simplified problem in \eqref{equ:problem3}.
	
\subsection{Ambiguities in Blind Decection}\label{sec:ambiguity}
	Note that whether a detection method is \emph{blind} solely depends on the choice of the prior distribution $p_{\mathbf{X}}(\mathbf{X})$. For example, if a portion of the entries of $\mathbf{X}$ are perfectly known to the receiver, these entries are called pilot symbols and the corresponding scheme is training-based. Otherwise, $\mathbf{X}$ is said to be partially known to the receiver if $\mathbf{X}$ follows a certain distribution with a non-zero mean. In this paper, we focus on blind detection, in which $\mathbf{X}$ is unknown to the receiver. In other words, $\mathbf{X}$ is a random matrix following a zero-mean distribution.
		
	The blind detection problem formulated in \eqref{equ:problem3} suffers from ambiguities in signal estimation. Specifically, the objective function in \eqref{equ:problem3} is invariant to phase shifts and permutations of the rows of $\mathbf{X}$. Denote by $\mathbf{\Sigma}$ a diagonal matrix with phase shifts in the diagonal and by $\mathbf{\Pi}$ a permutation matrix. Then, the ambiguities are caused by the fact that if $(\hat{\mathbf{H}}, \hat{\mathbf{X}})$ is a solution to \eqref{equ:problem3}, then $(\hat{\mathbf{H}}^\prime = \hat{\mathbf{H}} \mathbf{\Pi}^{-1} \mathbf{\Sigma}^{-1}, \hat{\mathbf{X}}^\prime = \mathbf{\Sigma} \mathbf{\Pi} \hat{\mathbf{X}})$ is also a valid solution to \eqref{equ:problem3}. These ambiguities need to be appropriately handled in the transceiver design, as detailed in the subsequent sections.
	
	Besides the above ambiguities, there is another ambiguity inherent to blind detection. Denote by $\mathcal{S}_k$ and $\mathcal{S}_{k^\prime}$ the supports of the non-zero elements of $\mathbf{h}_k$ and $\mathbf{h}_{k^\prime}$, respectively. We refer to $\mathcal{S}_k$ as the \emph{sparsity pattern} of transmitter $k$. We now describe the sparsity-pattern collision problem by assuming $\mathcal{S}_k = \mathcal{S}_{k^\prime}$. Denote by $\mathbf{H}_{\{k, k^\prime\}} = [\mathbf{h}_k, \mathbf{h}_{k^\prime}] $ and by $\mathbf{X}_{\{k, k^\prime\}} = [\mathbf{x}_k, \mathbf{x}_{k^\prime}]^\text{T} $ the channel matrix and the signal matrix of transmitters $k$ and $k^\prime$, respectively. Then, the noise-free received signal, denoted by $\mathbf{Y}_{\{k, k^\prime\}}$, is given by
	\begin{equation}
		\mathbf{Y}_{\{k, k^\prime\}} = \mathbf{H}_{\{k, k^\prime\}} \mathbf{X}_{\{k, k^\prime\}} = \begin{bmatrix}
									\mathbf{h}_k & \mathbf{h}_{k^\prime}
		\end{bmatrix}
		\begin{bmatrix}
			\mathbf{x}_k^\text{T}\\ \mathbf{x}_{k^\prime}^\text{T}
		\end{bmatrix}.
	\end{equation}
	We construct an alternative factorization of $\mathbf{Y}_{\{k, k^\prime\}}$ as
	\begin{equation}\label{equ:Y_2}
		\mathbf{Y}_{\{k, k^\prime\}} = \tilde{\mathbf{H}}_{\{k, k^\prime\}}\tilde{\mathbf{X}}_{\{k, k^\prime\}},
	\end{equation}
	where
	\begin{subequations}
		\begin{align}
			\tilde{\mathbf{H}}_{\{k, k^\prime\}} &= \frac{1}{a+b-1} \begin{bmatrix}
							a\mathbf{h}_k+(1-b)\mathbf{h}_{k^\prime} & (1-a)\mathbf{h}_k+b\mathbf{h}_{k^\prime}
						\end{bmatrix}\\
			\tilde{\mathbf{X}}_{\{k, k^\prime\}} &= 					\begin{bmatrix}
							b\mathbf{x}_k^\text{T}-(1-a)\mathbf{x}_{k^\prime}^\text{T} \\ -(1-b)\mathbf{x}_k^\text{T}+a\mathbf{x}_{k^\prime}^\text{T}
						\end{bmatrix}
		\end{align}
	\end{subequations}
	and $a$ and $b$ are numbers with $a+b\neq 1$. Since the two users have a common support, i.e., $\mathcal{S}_k = \mathcal{S}_{k^\prime}$, we have $\|\tilde{\mathbf{H}}_{\{k, k^\prime\}}\|_1 = \|\mathbf{H}_{\{k, k^\prime\}}\|_1$ in general. Together with \eqref{equ:Y_2}, we conclude that the solution of \eqref{equ:problem3} is not unique in the sense that if $\mathbf{H}_{\{k, k^\prime\}}$ and $\mathbf{X}_{\{k, k^\prime\}}$ is a solution to \eqref{equ:problem3}, then $\tilde{\mathbf{H}}_{\{k, k^\prime\}}$ and $\tilde{\mathbf{X}}_{\{k, k^\prime\}}$ is also a solution to \eqref{equ:problem3}. This problem is referred to as sparsity-pattern collision. In practice, sparsity-pattern collision arises when the sparsity patterns of two or more users are close to each other, resulting in a non-zero probability of detection failure. We will see in the next section that, the failure probability can be made arbitrarily small in the high SNR regime if $N$ is sufficiently large and the channel is sufficiently sparse.
	
\section{DoF Analysis}	
\subsection{Heuristics}
In this section, we analyze the achievable DoF of the proposed blind detection scheme for the massive MIMO system. As aforementioned, the DoF of the massive MIMO system without exploiting the channel sparsity is given by $K(1-\frac{K}{T})$ \cite{Hassibi03,Zheng02}. Intuitively, each receive antenna receives a signal mixture from $K$ transmit terminals, and therefore needs to use $K$ time slots to identify $K$ channel coefficients, which contributes to the fractional DoF gap of $\frac{K}{T}$ from the ideal DoF. The compressed-sensing based training schemes \cite{Bajwa10} can reduce the fractional DoF gap to $c\frac{K}{T}$, where $c$ is a constant coefficient between zero and one determined by the channel sparsity level $\rho$. In the following, we will show that the proposed blind detection scheme can further reduce the fractional DoF gap to $\frac{1}{T}$, which brings forward a significant step towards the understanding of the fundamental capacity of the massive MIMO system with channel sparsity.

Our analysis is based on the problem formulation in \eqref{equ:problem3}. The DoF characterizes the behaviour of the minimizer $(\hat{\mathbf{H}}, \hat{\mathbf{X}})$ when the SNR goes to infinity, or equivalently, when the noise level $\sigma^2$ goes to zero. This means that we need to understand how a small perturbation of the additive noise $\mathbf{W}$ affects the minimizer $\hat{\mathbf{H}}$ and $\hat{\mathbf{X}}$, and how the system parameters, such as $N$, $K$, $T$, and $\rho$, interact with each other to guarantee the existence of the minimizer $(\hat{\mathbf{H}}, \hat{\mathbf{X}})$ around the ground truth of $(\mathbf{H}, \mathbf{X})$. Moreover, in the analysis, we need to appropriately handle the phase and permutation ambiguities as well as the sparsity pattern collision problem described in Section III-C.
	
\subsection{Assumptions}
	We start with some regularity conditions.
	
	\textbf{Assumption 1} (Coefficient independence): 
	\begin{equation}
		\mathsf{E}[h_{n,k} h_{m,l}^*] = 0, \text{ for any }(n,k), (m,l) \in \mathcal{S} \text{ with } n\neq m \text{ or } k\neq l. 
	\end{equation}
	
	\textbf{Assumption 2} (Coefficient boundedness):
	\begin{subequations}\label{assumption2}
		\begin{align}
			\mathsf{P}(|h_{n,k}| > \epsilon ) &= 1, \text{ for some } \epsilon >0 \text{ and } (n,k) \in \mathcal{S} \label{assumption2_a}\\
			\text{ and } \qquad\qquad\qquad
			\mathsf{P}(\|\mathbf{h}_k\|_2< M_h ) &= 1, \text{ for some } M_h \text{ and } k\in \mathcal{I}_K \qquad\qquad\qquad\label{assumption2_b}
		\end{align}
	\end{subequations}
	where $\mathsf{P}(\cdot)$ represents the probability function. 
	
	The assumption in \eqref{assumption2_a} is reasonable since in practical systems the channel coefficients with $|h_{n,k}| <\epsilon$ can be ignored without compromising the system performance, provided that $\epsilon$ is sufficiently small. As for the assumption in \eqref{assumption2_b}, the length of the channel vector $\mathbf{h}_k$ for each transmitter $k$ is uniformly bounded since the received signal power from each transmitter is always bounded due to channel attenuation. 
	
	\textbf{Assumption 3} (Noise boundedness):
	\begin{equation}\label{equ:Zupper}
		\mathsf{P}\left(\|\mathbf{w}_n \|_2 > M_w \right) = 0, \text{ for some $M_w$ and } n\in \mathcal{I}_{N}
	\end{equation}
	where $\mathbf{w}_n$ denotes the transpose of the $n$th row of $\mathbf{W}$. 
	
	This assumption can be approximately satisfied by the Gaussian noise drawn from $\mathcal{CN}(0,\sigma^2)$. For example, for $M_w = \sqrt{3T\sigma^2}$, $\mathsf{P}\left(\|\mathbf{w}_n \|_2 > M_w \right)$ is only $0.00135$. In fact, with an appropriately chosen $M_w$, $\mathsf{P}\left(\|\mathbf{w}_n \|_2 > M_w \right)$ can be made arbitrarily small and thus can be ignored from a practical point of view. 
		
\subsection{Main Result}
	The main results of the paper are presented here and their proofs are given in the next subsection.
	
	\begin{theorem}\label{theorem:DoF}
		Assume $\rho \lesssim \frac{1}{K} \sqrt{\frac{T}{\log K}}$ and $K \lesssim T^2$. Then, for any $\eta > 0$ and $N \gtrsim TK^3+\eta K^2 $, the DoF of the system given in \eqref{equ:System1} is lower bounded by
		\begin{equation}\label{equ:BlindDoF}
			\mathsf{DoF}_{\text{\emph{blind}}} = (1-e^{-\eta}) K\left(1-\frac{1}{T} \right).
		\end{equation}
	\end{theorem}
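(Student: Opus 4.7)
The plan is to derive the DoF lower bound by studying the $\ell_1$-regularized bilinear minimizer in \eqref{equ:problem3} in the high-SNR (low noise) regime, on a high-probability event over the random sparse channel. I would split the argument into three pieces: (i) a noiseless identifiability result, showing that the minimizer equals $(\mathbf{H}\mathbf{\Pi}^{-1}\mathbf{\Sigma}^{-1},\mathbf{\Sigma}\mathbf{\Pi}\mathbf{X})$ up to the phase/permutation ambiguities of Section \ref{sec:ambiguity}, with probability at least $1-e^{-\eta}$ under the stated scaling of $N,K,T,\rho$; (ii) a perturbation step showing that when $\sigma^2$ is small, the minimizer of \eqref{equ:problem3} stays within $O(\sigma)$ of its noiseless limit; and (iii) an ambiguity-accounting step that converts recovery into DoF.

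For (i), the workhorse is the notion of a \emph{singleton} row of $\mathbf{H}$, i.e. a row with exactly one non-zero entry. Each row of $\mathbf{H}$ is a singleton for a fixed user $k$ with probability $\rho(1-\rho)^{K-1}$, so under $\rho\lesssim\frac{1}{K}\sqrt{T/\log K}$, $K\lesssim T^2$, and $N\gtrsim TK^3+\eta K^2$, a Chernoff bound gives that every user $k$ possesses enough singleton rows to span the $T$-dimensional signal space of $\mathbf{x}_k$, and that no two users share the same sparsity pattern, all with probability $\ge 1-e^{-\eta}$. On this event, the row $\mathbf{x}_k^{\text{T}}$ is identifiable (up to a scalar phase) from the corresponding singleton rows of $\mathbf{Y}$. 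One then verifies via a KKT/subgradient argument — in the spirit of the Spielman--Wang--Wright and Sun--Qu--Wright certificates for $\ell_1$ dictionary learning — that this factorization is the unique minimizer of \eqref{equ:problem3b} modulo $(\mathbf{\Sigma},\mathbf{\Pi})$, by showing that no nontrivial linear combination of the columns of $\mathbf{H}$ produces a column with strictly smaller $\ell_1$ mass. The extra $TK^3$ in the sample complexity is precisely what controls these cross-correlations via a union bound over alternative supports.

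For (ii), I would use Assumptions 1--3 and the strong convexity of the quadratic fit term to argue $\|\hat{\mathbf{X}}-\mathbf{\Sigma}\mathbf{\Pi}\mathbf{X}\|_F=O(\sigma)$ on the good event, so that $I(\hat{\mathbf{X}};\mathbf{X})$ grows like the number of recovered scalar unknowns times $\log\mathsf{SNR}$. The permutation $\mathbf{\Pi}$ is resolved for free by matching sparsity patterns (since they are distinct on the good event), while the diagonal phase $\mathbf{\Sigma}$ costs exactly one scalar per user to pin down (e.g.\ by fixing the first entry of each $\mathbf{x}_k$), leaving $K(T-1)$ information-carrying complex symbols per coherence block of length $T$. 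Dividing by $T$ and weighting by the success probability yields $\mathsf{DoF}_{\text{blind}}\ge(1-e^{-\eta})K(1-1/T)$; on the low-probability bad event the contribution to $I(\hat{\mathbf{X}};\mathbf{X})/\log\mathsf{SNR}$ vanishes by Fano. The main obstacle will be step (i): the identifiability certificate must simultaneously rule out sparsity-pattern collisions, guarantee column-wise $\ell_1$ minimality, and accommodate the bilinear unknowns, and it is the tight interplay among $\rho,K,T$ and $N$ inside this certificate that produces the exact scaling $N\gtrsim TK^3+\eta K^2$ appearing in the theorem.
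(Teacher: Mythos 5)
Your overall architecture (identifiability $\to$ perturbation $\to$ ambiguity accounting $\to$ DoF) matches the paper's, and your step (iii) is essentially what the paper does: one known symbol per user to pin down $\mathbf{\Sigma}$, a user label to resolve $\mathbf{\Pi}$, a worst-case-Gaussian-noise lower bound on $I(\hat{\tilde{\mathbf{x}}}_k;\tilde{\mathbf{x}}_k)$ \`a la M\'edard, and multiplication by the success probability $1-e^{-\eta}$. But the quantitative core of the theorem --- where the conditions $\rho\lesssim\frac{1}{K}\sqrt{T/\log K}$, $K\lesssim T^2$, and $N\gtrsim TK^3+\eta K^2$ actually come from --- is handled very differently, and your version has a genuine gap. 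The paper does not prove identifiability from scratch: Lemma~\ref{lemma2} invokes the local-minimum stability theorem for $\ell_1$-penalized dictionary learning of Gribonval et al.\ \cite{Gribonval15}, which directly yields a local minimizer of \eqref{equ:problem3} with $\|\mathbf{\Delta}\|_F^2<\delta^2$ with probability at least $1-e^{-\eta}$ whenever $N \gtrsim (TK^3+\eta K^2)\left(\frac{1}{\delta}\cdot\frac{M_w}{M_h}\right)^2$; the resulting bound $\|\mathbf{\Delta}\|_F^2\lesssim (TK^3+\eta K^2)M_w^2/(N M_h^2)$ with $M_w^2\propto\sigma^2$ is what drives the $\log\mathsf{SNR}$ scaling. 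Your singleton-row/dual-certificate construction is a genuinely different (ER-SpUD-style) identifiability route, but the counting you sketch does not produce the stated sample complexity: the expected number of singleton rows per user is $N\rho(1-\rho)^{K-1}\approx N\rho$, so a Chernoff/union bound gives a requirement of the form $N\gtrsim(\eta+\log K)/\rho$, which under $\rho\sim\frac{1}{K}\sqrt{T/\log K}$ is nowhere near $TK^3+\eta K^2$. Asserting that ``a union bound over alternative supports'' supplies the $TK^3$ term is precisely the step that needs a proof; it is the entire content of the theorem's hypothesis. (Note also that a single singleton row already determines $\mathbf{x}_k^{\text{T}}$ up to a scalar --- there is no $T$-dimensional space to ``span.'')

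Two further problems. First, your perturbation step leans on ``strong convexity of the quadratic fit term,'' but $\|\mathbf{Y}-\mathbf{H}\mathbf{X}\|_2^2$ is bilinear, hence non-convex in $(\mathbf{H},\mathbf{X})$, and it is exactly flat along the continuous phase/scaling directions $(\mathbf{H}\mathbf{\Sigma}^{-1},\mathbf{\Sigma}\mathbf{X})$; establishing curvature transverse to the ambiguity orbit is the hard part, and it is what the cited theorem provides. Second, resolving $\mathbf{\Pi}$ ``for free by matching sparsity patterns'' is not available to a blind receiver, which does not know the users' supports a priori; the paper instead embeds a $\lceil\log K\rceil$-bit label in each codeword, costing $\frac{K\lceil\log K\rceil}{T}$ in rate (DoF-neutral, so your final expression survives, but the mechanism as you state it is wrong). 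As written, the proposal reproduces the bookkeeping in the second half of the paper's proof but leaves unproven the key estimate $\|\mathbf{\Delta}\|_F^2\lesssim (TK^3+\eta K^2)\sigma^2/N$ under the stated scalings of $\rho$, $K$, $T$, and $N$.
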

	
	\emph{Remark 1}: In \eqref{equ:BlindDoF}, $e^{-\eta}$ can be understood as the detection failure probability. The sparsity-pattern collision discussed in Section \ref{sec:ambiguity} is one factor to cause detection failure. Intuitively, in practical massive MIMO systems, the sparsity-pattern collision arises when users are geographically close to each other. Signals from co-located users undergo similar scattering, leading to similar AoAs at the receive antenna array and hence similar sparsity patterns in the angular domain. Increasing the number of antennas improves the AoA resolution of an antenna array, and therefore reduces the probability of sparsity-pattern collision as well as the detection failure probability $e^{-\eta}$. This explains why in Theorem \ref{theorem:DoF} a larger $N$ allows a larger $\eta$ (and hence a smaller detection failure probability $e^{-\eta}$). The DoF lower bound in \eqref{equ:BlindDoF} can approach $K\left(1-\frac{1}{T}\right)$ arbitrarily closely provided that $N$ is sufficiently large.
	
	\emph{Remark 2}: From Theorem \ref{theorem:DoF}, our scheme achieves a DoF arbitrarily close to $K(1-1/T)$, provided that $N$ is sufficiently large and the channel is sufficiently sparse. Compared with the ideal case in \eqref{equ:coh_DoF}, our scheme has a fractional DoF loss of $1/T$. This loss is caused by the phase ambiguity of $\mathbf{H}$ and $\mathbf{X}$ discussed in Section \ref{sec:ambiguity}. In other words, a fraction $1/T$ of DoF is required to eliminate the ambiguity. We emphasize that the fractional DoF loss $1/T$ is independent of $K$, which is the key advantage of the proposed blind detection scheme compared with other existing counterparts.    
	
\subsection{Proof of Theorem \ref{theorem:DoF}}
	For simplicity, we assume that $\alpha_1 = \alpha_2 = \cdots = \alpha_K = 1/K$, and that the elements of $\mathbf{X}$ are independently drawn from $\mathcal{CN}(0,P/K)$. The optimization of the power coefficients $\{\alpha_k \}$ and the distribution of $\mathbf{X}$ may lead to a better performance, but is out of the scope of this paper. 
	
	We focus on the lower bound in \eqref{equ:problem3}. From \eqref{equ:problem3}, we see that, if $(\hat{\mathbf{H}}, \hat{\mathbf{X}})$ is a solution to \eqref{equ:problem3}, then any $(\hat{\mathbf{H}} \mathbf{\Pi}^{-1} \mathbf{\Sigma}^{-1}, \mathbf{\Sigma} \mathbf{\Pi} \hat{\mathbf{X}})$ is also a valid solution to \eqref{equ:problem3} with $\mathbf{\Sigma}$ being a diagonal matrix for phase ambiguity and $\mathbf{\Pi}$ being a permutation matrix for permutation ambiguity. By considering these inherent ambiguities, an estimate $\hat{\mathbf{X}}$ for \eqref{equ:problem3} can be modeled by
	\begin{equation}\label{equ:x_hat_model1}
		\hat{\mathbf{X}} = \mathbf{\Sigma} \mathbf{\Pi}  (\mathbf{X}  + \mathbf{\Delta}),
	\end{equation}
	where the estimation error $\mathbf{\Delta}$ is zero-mean and uncorrelated with $\mathbf{X}$. Note that $\mathbf{\Sigma}$ and $\mathbf{\Pi}$ are regarded as deterministic parameters unknown to the receiver. We have the following result. 
	
	\begin{lemma}\label{lemma2}
		Assume that $\rho \lesssim \frac{1}{K} \sqrt{\frac{T}{\log K}}$ and $K\lesssim T^2$. Then, in the high SNR regime, for any $\eta >0$ and $\delta >0 $, if $N \gtrsim (TK^3+\eta K^2) \left(\frac{1}{\delta}\cdot \frac{M_w}{M_h}\right)^2$, there admits a solution $\hat{\mathbf{X}}$ for the problem in \eqref{equ:problem3} such that $\|\mathbf{\Delta} \|_F^2  < \delta^2$ in a probability at least $1-e^{-\eta}$.
	\end{lemma}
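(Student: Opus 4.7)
The plan is to exhibit an explicit candidate $(\hat{\mathbf{H}}, \hat{\mathbf{X}})$ in a small neighborhood of the ground-truth pair $(\mathbf{H}, \mathbf{X})$ (after resolving the phase/permutation ambiguity) and then verify that this candidate attains the global minimum of \eqref{equ:problem3b} with probability at least $1-e^{-\eta}$. Since the lemma only asserts the existence of a minimizer with small error, a constructive argument suffices. After fixing $\mathbf{\Sigma}=\mathbf{I}$ and $\mathbf{\Pi}=\mathbf{I}$ without loss of generality, I would write $\hat{\mathbf{X}} = \mathbf{X} + \mathbf{\Delta}$ and impose the \emph{sparsity-preservation} constraint $\mathrm{supp}(\hat{\mathbf{H}}) = \mathcal{S}$, which keeps $\|\hat{\mathbf{H}}\|_1$ close to $\|\mathbf{H}\|_1$ so that the $\ell_1$ penalty does not inflate and minimization reduces to a constrained least-squares problem in the nonzero entries of $\hat{\mathbf{H}}$ and in $\mathbf{\Delta}$.

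The next step is to bound the candidate's error quantitatively. For each antenna $n$, the nonzero entries $\{\hat h_{n,k}\}_{k \in \mathcal{S}_n}$ are obtained by regressing the $n$th row of $\mathbf{Y}$ onto the rows of $\hat{\mathbf{X}}$ indexed by $\mathcal{S}_n$, giving a per-row error of order $\|\mathbf{w}_n\|_2 / \sigma_{\min}(\mathbf{X}_{\mathcal{S}_n})$. Because $\mathbf{X}$ has i.i.d.\ $\mathcal{CN}(0,P/K)$ entries and $|\mathcal{S}_n| \leq \rho K$, standard concentration of extreme singular values for tall Gaussian matrices, combined with a union bound over the $N$ antennas, yields $\sigma_{\min}(\mathbf{X}_{\mathcal{S}_n}) \gtrsim \sqrt{TP/K}$ with failure probability exponentially small in $T$; the hypotheses $\rho \lesssim \frac{1}{K}\sqrt{T/\log K}$ and $K \lesssim T^2$ place us exactly in this regime. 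A dual analysis on the column side then shows that $\mathbf{\Delta}$ can be chosen to annihilate the component of the residual lying in the range of $\mathbf{H}$, producing a bound of the form $\|\mathbf{\Delta}\|_F^2 \lesssim (K/T) \cdot M_w^2/M_h^2 \cdot N^{-1} \cdot (\text{combinatorial factor})$, which drops below $\delta^2$ precisely when $N \gtrsim (TK^3 + \eta K^2)(M_w/(\delta M_h))^2$.

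The final step, and the main obstacle, is to rule out competing minimizers outside this neighborhood. Any alternative global minimum must either (i) violate sparsity preservation, in which case the extra $\ell_1$ mass $\lambda(\|\hat{\mathbf{H}}'\|_1 - \|\mathbf{H}\|_1)$ can be made to outweigh any residual reduction by an appropriate choice of $\lambda$, using the magnitude lower bound $\epsilon$ in Assumption 2(a) to create a strict separation; or (ii) correspond to a sparsity-pattern collision of the form \eqref{equ:Y_2}. Controlling the probability of (ii) is where the combinatorics bite: a union bound over the $\binom{K}{2}$ column pairs together with the large-deviation tail of the support-overlap distribution (for a random sparsity pattern of density $\rho$) must be tuned so that the probability of any two columns sharing enough mass to admit an alternative factorization with no larger $\ell_1$-norm falls below $e^{-\eta}$. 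The hard part is tying these two regimes together so that the noise-induced slack in the residual, the $\ell_1$-gap from Assumption 2(a), and the sparsity-collision probability all balance at the same scaling of $N$, $K$, $T$, and $\rho$ stated in the lemma; this joint balancing is where the factor $TK^3$ (from the singular-value / collision combinatorics) and the additive $\eta K^2$ (from the exponential tail) each arise, and where the multiplicative $(M_w/(\delta M_h))^2$ enters as the conversion between the noise-to-signal separation and the bound on $\|\mathbf{\Delta}\|_F$.
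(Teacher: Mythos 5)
The paper does not prove this lemma from scratch: its entire proof consists of verifying the hypotheses of an existing local-stability result for noisy sparse dictionary learning (Gribonval \emph{et al.} \cite{Gribonval15}, Theorem 2) --- the sparsity condition $\rho \lesssim \frac{1}{K}\sqrt{T/\log K}$ and $K \lesssim T^2$ map onto that theorem's conditions on the coefficient matrix, Assumptions 1--3 supply its boundedness conditions, and the high-SNR limit guarantees that admissible $\lambda$ and $\delta$ exist. The scaling $N \gtrsim (TK^3+\eta K^2)\left(\frac{1}{\delta}\cdot\frac{M_w}{M_h}\right)^2$ and the probability $1-e^{-\eta}$ are then read off directly from that theorem; no balancing argument is performed in the paper at all. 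Note also that what is actually established (both in \cite{Gribonval15} and in the paper's proof) is the existence of a \emph{local} minimum of \eqref{equ:problem3} within radius $\delta$ of $\mathbf{X}$, not a global one.

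Your proposal attempts to rebuild this result from first principles, and the gap is that the decisive steps are described rather than executed. The first two steps (least-squares construction of $\hat{\mathbf{H}}$ on the true support, singular-value concentration for the $T\times|\mathcal{S}_n|$ Gaussian submatrices) are plausible but only control the error of one particular candidate; they do not by themselves produce the stated dependence $TK^3+\eta K^2$, which you attribute to an unspecified ``combinatorial factor.'' The third step --- excluding competing minimizers, which you correctly identify as the crux --- is precisely the content of the cited theorem, and your treatment of it is a statement of what would need to be shown rather than a proof: the claim that the $\ell_1$ gap induced by Assumption 2(a) ``can be made to outweigh any residual reduction by an appropriate choice of $\lambda$'' is not quantified against the bilinear residual term, the union bound over the $\binom{K}{2}$ support-collision events is not computed, and no derivation ties these to the specific exponent $e^{-\eta}$ or to the threshold on $N$. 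Moreover, by targeting a \emph{global} minimum you have set yourself a strictly harder problem than the one the lemma's proof actually solves, with no mechanism offered to handle minimizers far from the ground truth that are not of the two types you enumerate. As written, the proposal is a research program whose hardest component coincides with the external theorem the paper invokes, so it cannot be accepted as a proof of the lemma.
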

	\begin{proof}
		The proof of Lemma \ref{lemma2} is mainly based on the result of [36, Theorem 2]. Recall that the elements of $\mathbf{X}$ are independently drawn from $\mathcal{CN}(0,P/K)$. Then, $\mathbf{X}$ satisfies the conditions [36, eqn. (17)] and [36, eqn. (18)] provided $\rho \lesssim \frac{1}{K} \sqrt{\frac{T}{\log K}}$. With Assumptions 1-3, condition [36, eqn. (19)] is satisfied, provided $K\lesssim T^2$. In the high SNR regime, as the noise level $\sigma^2$ tends to zero, we can always find appropriate parameters $\lambda$ and $\delta$ to satisfy the conditions [36, eqn. (22)-(25)]. Therefore, from [36, Theorem 2], the problem in \eqref{equ:problem3} admits a local minimum within radius $\delta$ centered around $\mathbf{X}$ (i.e., $\|\mathbf{\Delta} \|_F^2  < \delta^2$) with probability at least $1-e^{-\eta}$, provided that $N \gtrsim (TK^3+\eta K^2) \left(\frac{1}{\delta}\cdot \frac{M_w}{M_h}\right)^2$. This completes the proof of Lemma \ref{lemma2}.
	\end{proof}

	From Lemma \ref{lemma2}, we have $N \gtrsim (TK^3+\eta K^2) \left(\frac{1}{\delta}\cdot \frac{M_w}{M_h}\right)^2 $ and $\|\mathbf{\Delta} \|_F^2 < \delta^2$, yielding 
	\begin{equation}\label{equ:Delta_F}
		\|\mathbf{\Delta} \|_F^2 \lesssim \frac{c_1 (TK^3+\eta K^2)}{N} \frac{M_w^2}{M_h^2}.
	\end{equation}
	
	We now describe how to determine the diagonal matrix $\mathbf{\Sigma}$ in \eqref{equ:x_hat_model1}. We use one symbol in each $\mathbf{x}_k$ to estimate $\mathbf{\Sigma}$.\footnote{For ease of analysis, we assume coherent detection for each $\mathbf{x}_k$ in the proof. That is, one symbol in each $\mathbf{x}_k$ is used to estimate $\Sigma_k$, and the estimated $\Sigma_k$ is then used to detect the rest of $\mathbf{x}_k$. We emphasize that non-coherent detection can also be used here. For example, differential coding can be used to remove the phase uncertainty caused by $\Sigma_k$. We conjecture that coherent and non-coherent detections achieve the same DoF, as motivated by the result in \cite{Zheng02}.} Without loss of generality, we assume that $x_{k,1} = \sqrt{\frac{P}{K}}, k \in \mathcal{I}_K $ is known to the receiver, where $x_{k,1}$ is the first entry of $\mathbf{x}_k$. Denote by $\pmb{\delta}_k$ the transpose of the $k$th row of $\mathbf{\Delta}$. Then, from \eqref{equ:x_hat_model1}, we obtain
		\begin{equation}
			\hat{x}_{k,1} = \Sigma_k \Big(\sqrt{\frac{P}{K}} + \delta_{\pi(k),1}\Big),
		\end{equation}
		where $\hat{x}_{k,1}$ denotes the $(k,1)$th element of $\hat{\mathbf{X}}$, $\Sigma_k$ represents the $k$th diagonal element of $\mathbf{\Sigma}$, $\pi(\cdot)$ is the permutation function corresponding to $\mathbf{\Pi}$, and $\delta_{\pi(k),1}$ is the first entry of $\pmb{\delta}_{\pi(k)}$. Then, an estimate of $\Sigma_k$, denoted by $\hat{\Sigma}_k$, is given by 
		\begin{equation}\label{equ:SigmaEst}
			\hat{\Sigma}_k = \sqrt{\frac{K}{P}} \hat{x}_{k,1} = \Sigma_k + \sqrt{\frac{K}{P}} \Sigma_k \delta_{\pi(k),1}.
		\end{equation} 
		The corresponding estimation error is given by
		\begin{equation} \label{equ:SigmaError}
			\Sigma_k - \hat{\Sigma}_k = -\sqrt{\frac{K}{P}} \Sigma_k \delta_{\pi(k),1}.
		\end{equation}
		
		Denote by $\tilde{\mathbf{x}}_k \triangleq [x_{k,2},\cdots, x_{k,T}]^\text{T}\in \mathbb{C}^{(T-1)\times 1}$, $\hat{\tilde{\mathbf{x}}}_k \triangleq [\hat{x}_{k,2},\cdots, \hat{x}_{k,T}]^\text{T}\in \mathbb{C}^{(T-1)\times 1}$, and $\tilde{\pmb{\delta}}_k = [\delta_{k,2}, \cdots, \delta_{k,T}]^\text{T}\in \mathbb{C}^{(T-1)\times 1}$. From \eqref{equ:x_hat_model1}, we have
		\begin{equation}
			\hat{\tilde{\mathbf{x}}}_k = \hat{\Sigma}_k \tilde{\mathbf{x}}_{\pi(k)} + (\Sigma_k -\hat{\Sigma}_k) \tilde{\mathbf{x}}_{\pi(k)}  + \Sigma_k \tilde{\pmb{\delta}}_{\pi(k)}
							 = \hat{\Sigma}_k \tilde{\mathbf{x}}_{\pi(k)}  + \mathbf{v}_k, \label{equ:x_hat3}
		\end{equation}
		where $\mathbf{v}_k \triangleq (\Sigma_k -\hat{\Sigma}_k) \tilde{\mathbf{x}}_{\pi(k)}  + \Sigma_k \tilde{\pmb{\delta}}_{\pi(k)}$. The average variance of the entries of $\mathbf{v}_k$ is given by
		\begin{align}
			\sigma_{v_k}^2 = \frac{1}{T-1} \mathsf{E}[\mathbf{v}_k^\text{H} \mathbf{v}_k] 
			&= \frac{1}{T-1} \mathsf{E}\left[\big((\Sigma_k -\hat{\Sigma}_k) \tilde{\mathbf{x}}_{\pi(k)}  + \Sigma_k \tilde{\pmb{\delta}}_{\pi(k)}\big)^\text{H} \big((\Sigma_k -\hat{\Sigma}_k) \tilde{\mathbf{x}}_{\pi(k)}  + \Sigma_k \tilde{\pmb{\delta}}_{\pi(k)}\big)\right] \nonumber\\
			&\overset{(a)}{=} \frac{1}{T-1}\mathsf{E}\left[ |\Sigma_k -\hat{\Sigma}_k|^2 \tilde{\mathbf{x}}_{\pi(k)}^\text{H} \tilde{\mathbf{x}}_{\pi(k)}\right] + \frac{1}{T-1}\mathsf{E}\left[|\Sigma_k|^2 \tilde{\pmb{\delta}}_{\pi(k)}^\text{H} \tilde{\pmb{\delta}}_{\pi(k)}\right]\nonumber\\
			&\overset{(b)}{=} \frac{P}{K} \mathsf{E}\left[|\Sigma_k -\hat{\Sigma}_k|^2\right] + \sigma_{\delta_{\pi(k)}}^2 |\Sigma_k |^2 \nonumber\\
			&\overset{(c)}{=} \sigma_{\delta_{\pi(k)}}^2 |\Sigma_k |^2 + \sigma_{\delta_{\pi(k)}}^2 |\Sigma_k |^2 \nonumber\\
			&= 2\sigma_{\delta_{\pi(k)}}^2 |\Sigma_k |^2 \label{equ:Rv}
		\end{align}
		where $\sigma_{\delta_{k}}^2$ is the average variance of the entries of $\pmb{ \delta}_{k}$, defined by
		\begin{equation} \label{equ:sigma_delta_k}
			\sigma_{\delta_{k}}^2 \triangleq \frac{1}{T} \mathsf{E}[\pmb{ \delta}_{k}^\text{H} \pmb{ \delta}_{k}].
		\end{equation}
		In \eqref{equ:Rv}, step $(a)$ follows from the fact that $\mathbf{X}$ and $\mathbf{\Delta}$ are zero-mean and uncorrelated; step $(b)$ follows from \eqref{equ:sigma_delta_k}; step $(c)$ follows from \eqref{equ:SigmaError}. In general, $\mathbf{v}_k$ is correlated with the signal $\hat{\Sigma}_k \tilde{\mathbf{x}}_{\pi(k)}$, and is not necessarily Gaussian. From \cite{Medard00}, the mutual information between $\hat{\tilde{\mathbf{x}}}_k$ and $\tilde{\mathbf{x}}_k$ is lower bounded by the case when $\mathbf{v}_k$ is independent and Gaussian. Then
		\begin{align}
			I(\hat{\tilde{\mathbf{x}}}_k; \tilde{\mathbf{x}}_k) &\overset{(a)}{=} (T-1) \mathsf{E}\left[\log\left(1 + \frac{\frac{P}{K}|\Sigma_k |^2\left(1+\frac{K}{P}\sigma_{\delta_{\pi(k)}}^2 \right) }{\sigma_{v_k}^2} \right) \right] \nonumber\\
			&\overset{(b)}{=} (T-1) \mathsf{E} \left[\log \left(1+\frac{P}{2K\sigma_{\delta_{\pi(k)}}^2}\left(1+ \frac{K}{P}\sigma_{\delta_{\pi(k)}}^2 \right) \right) \right]\nonumber\\
			&\overset{(c)}{=} (T-1) \log\left( \frac{P}{\sigma_{\delta_{\pi(k)}}^2} \right) + o(T)
		\end{align}
		where step $(a)$ follows from \eqref{equ:SigmaEst} and \eqref{equ:x_hat3}; step $(b)$ follows from \eqref{equ:Rv}; and step $(c)$ follows from the high SNR approximation. Then, we have
		\begin{equation}\label{equ:sumRate1}
			\sum_{k=1}^K I(\hat{\tilde{\mathbf{x}}}_k; \tilde{\mathbf{x}}_k) = (T-1)\sum_{k=1}^K \log\left( \frac{P}{\sigma_{\delta_{\pi(k)}}^2} \right) + o(T) 
			\overset{(a)}{\geq} (T-1)K\log\left( \frac{P}{\frac{\mathsf{E}[\|\mathbf{\Delta}\|_F^2] }{KT} } \right) + o(T)
		\end{equation}
		where step (a) follows from $\prod_{k=1}^K \sigma_{\delta_{\pi(k)}}^2 \leq \left(\frac{\mathsf{E}[\|\mathbf{\Delta}\|_F^2] }{KT}\right)^K $ for $\sum_{k=1}^K \sigma_{\delta_{\pi(k)}}^2 = \frac{\mathsf{E}[\|\mathbf{\Delta}\|_F^2] }{T}$, with the equality holds for $\sigma_{\delta_{1}}^2 = \cdots = \sigma_{\delta_{K}}^2 = \frac{\mathsf{E}[\|\mathbf{\Delta}\|_F^2] }{KT}$. 
				
		We now consider the impact of permutation $\mathbf{\Pi}$ in \eqref{equ:x_hat_model1}. From \eqref{equ:x_hat3}, we see that the receiver decodes each $\tilde{\mathbf{x}}_{\pi(k)}$ from $\hat{\tilde{\mathbf{x}}}_{\pi(k)}$. Since $\mathbf{\Pi}$ is unknown, the receiver needs to associate each decoded codeword to the corresponding transmitter. This can be done by inserting a transmitter label into each codeword. To identify $K$ transmitters, each label costs $\lceil \log K \rceil$ bits. Thus, an achievable rate of the system in \eqref{equ:System1} is given by
		\begin{equation}
			\mathsf{R}_{\text{blind}} = \frac{1}{T} \sum_{k=1}^K I(\hat{\tilde{\mathbf{x}}}_k; \tilde{\mathbf{x}}_k) -\frac{K\lceil \log K \rceil}{T} 
			= \frac{T-1}{T}K\log\left( \frac{P}{\frac{\mathsf{E}[\|\mathbf{\Delta}\|_F^2] }{KT} } \right) +o(1) -\frac{K\lceil \log K \rceil}{T}. \label{equ:rate1}
		\end{equation}
		From \eqref{equ:Delta_F}, with a probability of at least $1-e^{-\eta }$, we obtain
		\begin{equation} \label{equ:sigma_delta}
			\frac{\|\mathbf{\Delta} \|_F^2}{KT} \lesssim \frac{(TK^2+\eta K)}{NT} \frac{M_w^2}{M_h^2}.
		\end{equation}
		Thus, we have
		\begin{equation}
			\mathsf{R}_{\text{blind}} \geq (1-e^{-\eta})K\left(1-\frac{1}{T}\right) \log \frac{PM_h^2}{M_w^2} + \tilde{c} + o(1), \label{equ:rate2}
		\end{equation}
		where 
		\begin{equation}
			\tilde{c} = (1-e^{-\eta})K\left(1-\frac{1}{T}\right)\log \frac{NT}{TK^2+\eta K}-\frac{K\lceil \log K \rceil}{T}. 
		\end{equation}
		Since $M_h^2$ is a constant and $M_w^2$ is proportional to $\sigma^2$ (from Assumption 3 and the discussions therein), the DoF of the proposed scheme is bounded by 
		\begin{equation}
			\mathsf{DoF}_{\text{blind}} = \lim_{\mathsf{SNR} \rightarrow \infty} \frac{\mathsf{R}_{\text{blind}}}{\log \mathsf{SNR}} = (1-e^{-\eta})K\left(1-\frac{1}{T} \right),
		\end{equation}
		which completes the proof of Theorem \ref{theorem:DoF}.
	
\section{Blind Detection Algorithm}\label{sec:algorithm}
\subsection{Preliminaries}
	In this section, we present a blind detection algorithm to realize the DoF advantage promised by our analysis in practical system settings. We focus on the problem in \eqref{equ:problem2}, which is in general difficult to solve optimally. As aforementioned, the blind detection problem under concern is very similar to the matrix factorization problem in sparse dictionary learning \cite{Sun17}. In the context of dictionary learning, the blind detection problem can be rephrased as to learn the dictionary $\mathbf{X}$ and the corresponding sparse representation $\mathbf{H}$. Dictionary learning algorithms, such as the K-SVD algorithm \cite{KSVD06}, the SPAMS algorithm \cite{SPAMS10}, and the BiG-AMP algorithm \cite{PSchniter14}, can be potentially applied to approximately solve \eqref{equ:problem2}. However, there is a notable difference in the system configuration between the massive MIMO system and a typical dictionary learning problem. In dictionary learning, the dictionary $\mathbf{X}$ is usually overcomplete, i.e., $T \leq K$, so as to allow more flexible dictionaries and richer data representation. However, in massive MIMO, the setting of $T > K$ is of more relevance since the coherence time $T$ is usually in hundreds and the number of users is usually in tens.
	
	It is known that BiG-AMP is a state-of-the-art dictionary learning algorithm. However, when applied to solve \eqref{equ:problem2}, BiG-AMP does not perform well in practical systems where $T > K$. To address this issue, we propose to preprocess the observed signal matrix $\mathbf{Y}$ by projecting it onto the signal space $\mathbf{X}$ and then apply BiG-AMP to the image of the projection. In addition, we describe how to eliminate the phase and permutation ambiguities. The resulting algorithm, referred to as P-BiG-AMP, is presented in the following subsections. Note that we always assume $T > K$ in the subsequent discussions. For the case of $T \leq K$, we simply skip the signal projection operation and directly apply BiG-AMP to solve \eqref{equ:problem2}.
	
\subsection{Signal Projection}
	From \eqref{equ:System1}, the noise-free received signal $\mathbf{HX}$ has the same row space as $\mathbf{X}$ does. Thus, the row space of $\mathbf{X}$ can be estimated based on $\mathbf{Y}$ as follows. 
	
	Let the singular value decomposition (SVD) of $\mathbf{Y}$ be 
	\begin{equation}
		\mathbf{Y} = \mathbf{UD}\mathbf{V}^\text{H},
	\end{equation}
	where $\mathbf{U}\in \mathbb{C}^{N\times N}$ and $\mathbf{V}\in \mathbb{C}^{T\times T}$ are unitary matrices, and $\mathbf{D}\in \mathbb{R}^{N\times T}$ is a diagonal matrix with diagonal elements arranged in a descend order. Then, we partition $\mathbf{V}$ as
	\begin{equation}
		\mathbf{V} = [\mathbf{V}_1, \mathbf{V}_2],
	\end{equation}
	where $\mathbf{V}_1 \in \mathbb{C}^{T\times K}$ corresponds to the largest $K$ singular values, and $\mathbf{V}_2 \in \mathbb{C}^{T\times (T-K)}$ corresponds to the other singular values. Note that the column space of $\mathbf{V}_1$ gives an estimate of the row space of $\mathbf{X}$.
	
	We then project the observed signal $\mathbf{Y}$ onto the column space of $\mathbf{V}_1$, yielding 
	\begin{equation}\label{equ:system5}
		\mathbf{Y}^\prime = \mathbf{Y} \mathbf{V}_1 = \mathbf{HX}\mathbf{V}_1 + \mathbf{W} \mathbf{V}_1 = \mathbf{H}\mathbf{X}^\prime + \mathbf{W}^\prime,
	\end{equation} 
	where $\mathbf{X}^\prime \triangleq \mathbf{X}\mathbf{V}_1 \in \mathbb{C}^{K\times K} $ and $\mathbf{W}^\prime \triangleq \mathbf{W}\mathbf{V}_1 \in \mathbb{C}^{N\times K} $. After projection, the average transmission power constraint for $\mathbf{X}^\prime$ is given by
	\begin{equation}
		\frac{1}{K}\mathsf{E}[(\mathbf{x}_k^{\prime})^{\text{H}} \mathbf{x}_k^\prime ] \leq \alpha_k P, \text{ for all } k \in \mathcal{I}_K \triangleq \{1,2,\cdots, K\},
	\end{equation}
	where $\mathbf{x}_k^{\prime} \in \mathbb{C}^{K\times 1}$ is the transpose of the $k$th row of $\mathbf{X}^\prime$. 
	
	The next step is to factorize $\mathbf{Y}^\prime$ to obtain $\mathbf{H}$ and $\mathbf{X}^\prime$. Denote by $p_{\mathbf{X}^\prime}(\mathbf{X}^\prime)$ the prior distribution of $\mathbf{X}^\prime$. Then, similar to \eqref{equ:problem2}, the blind detection problem can be written as
	\begin{equation}\label{equ:problem4}
		(\hat{\mathbf{H}}, \hat{\mathbf{X}}^\prime) = \arg\max_{\mathbf{H}, \mathbf{X}^\prime: \frac{1}{K}\mathsf{E}[(\mathbf{x}_k^\prime)^\text{H} \mathbf{x}_k^\prime] \leq \alpha_k P, k\in \mathcal{I}_K } \exp\Big(-\frac{1}{\sigma^2}\| \mathbf{Y}-\mathbf{H}\mathbf{X}^\prime \|_2^2\Big) p_{\mathbf{X}^\prime} (\mathbf{X}^\prime) p_{\mathbf{H}} (\mathbf{H}).
	\end{equation}
	The BiG-AMP algorithm can then be directly used to solve the problem in \eqref{equ:problem4}.

\subsection{BiG-AMP Algorithm}
	\begin{figure}[!t]
		\centering
		\includegraphics[width = 0.8\textwidth]{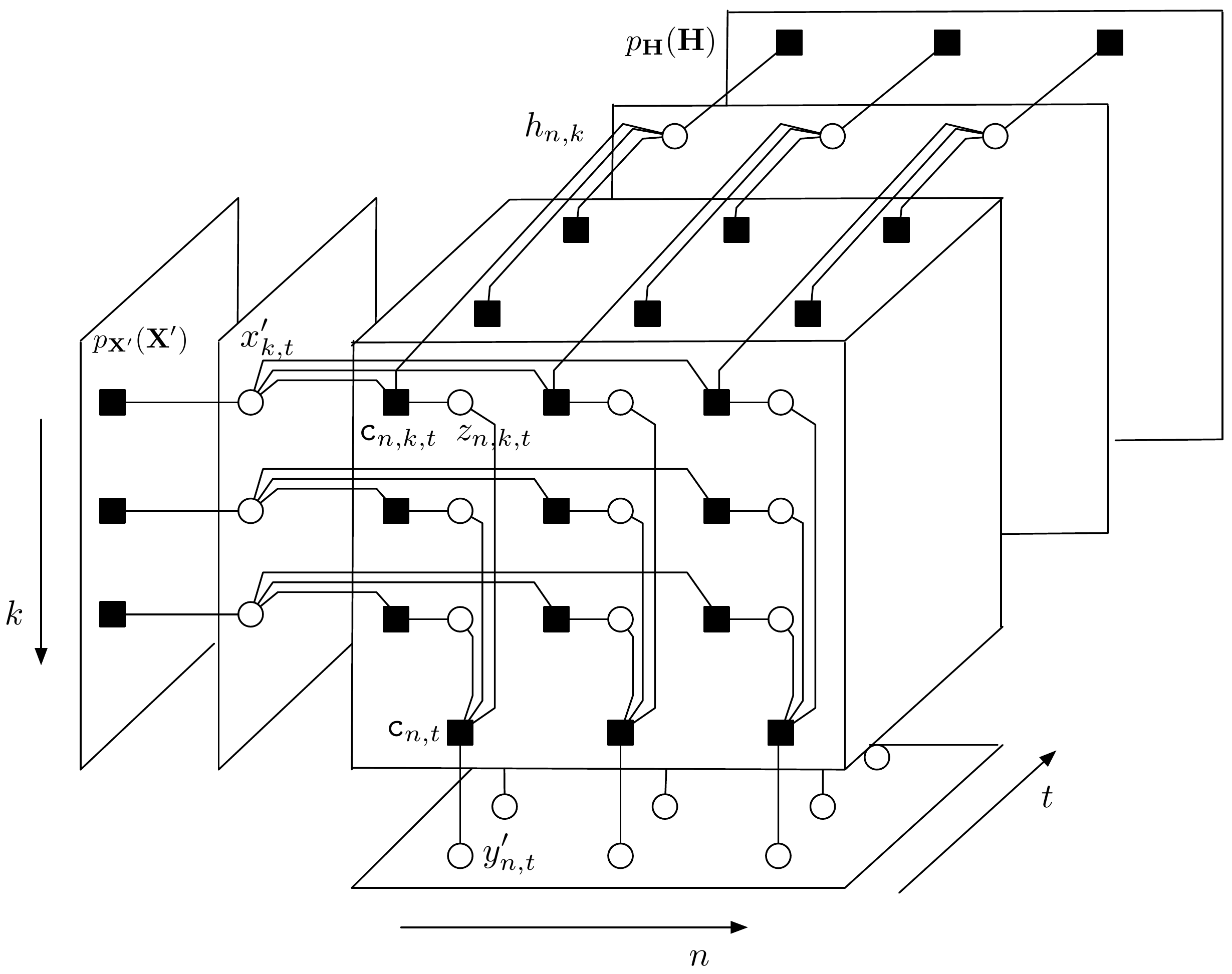}
	 	\caption{The factor graph representation for the problem in \eqref{equ:problem2}.}\label{Fig:FactorGraph}
	\end{figure}
	
	For completeness, we describe the BiG-AMP algorithm as follows. We start with the factor graph representation associated with the problem in \eqref{equ:problem4}. Denote by $y_{n,t}^\prime$ the $(n,t)$th element of $\mathbf{Y}^\prime$. Then, the system model \eqref{equ:system5} can be equivalently written as
	\begin{equation}
		y_{n,t}^\prime = \sum_{k = 1}^K z_{n,k,t} + w_{n,t}^\prime, \text{ for all } n\in \mathcal{I}_N \text{ and } t\in \mathcal{I}_K \label{equ:sum_constraint}
	\end{equation}
	where
	\begin{equation}
		z_{n,k,t} = h_{n,k} x_{k,t}^\prime, \text{ for all } k \in \mathcal{I}_K \label{equ:prod_constraint}
	\end{equation}
	with $x_{k,t}^\prime$ and $w_{n,t}^\prime$ are the $(k,t)$th and $(n,t)$th element of $\mathbf{X}^\prime$ and $\mathbf{W}^\prime$, respectively. With \eqref{equ:sum_constraint} and \eqref{equ:prod_constraint}, we construct a factor graph illustrated in Fig. \ref{Fig:FactorGraph}. The factor graph consists of two types of nodes: variable nodes and check nodes. Variable nodes include $\{h_{n,k}\}$, $\{x_{k,t}^\prime\}$, $\{z_{n,k,t}\}$, and $\{y_{n,t}^\prime \}$; check nodes include the $(n,t)$th equation in \eqref{equ:sum_constraint} (denoted by $\mathtt{c}_{n,t}$) for all $n\in \mathcal{I}_N$ and $t\in \mathcal{I}_K$, the $(n,k,t)$th equation in \eqref{equ:prod_constraint} (denoted by $\mathtt{c}_{n,k,t}$) for all $n\in \mathcal{I}_N$, $k\in \mathcal{I}_K$, and $t\in \mathcal{I}_K$, and also the prior distributions $p_{\mathbf{H}}(\mathbf{H})$ and $p_{\mathbf{X}^\prime}(\mathbf{X}^\prime)$. In Fig. \ref{Fig:FactorGraph}, all variable nodes appear as white circles and all check nodes appear as black boxes. There is an edge connection between a variable node and a check node when the variable node appears in the equation corresponding to the check node.
	
\begin{table}[!t]
	\caption{Algorithm 1: The BiG-AMP Algorithm}
	\label{tab:algorithm}
	\centering
	\begin{tabular}{l r}
	\hline
		\textbf{Input}: $\mathbf{Y}^\prime$, prior distributions $p_{\mathbf{H}}(\mathbf{H}) $ and $p_{\mathbf{X}^\prime}(\mathbf{X}^\prime)$.\\
		\textbf{Initialization}: $\hat{h}_{n,k}(1) = 0$, $v_{n,k}^h = 1$, $\hat{x}_{k,t}^\prime(1)$ is randomly drawn from $p_{x_{k,t}^\prime}(x_{k,t}^\prime)$,\\
		$v_{k,t}^x(1) = \alpha_k P$, and $\hat{s}_{n,t}(0) = 0$.\\
		\textbf{for} $m = 1, \cdots, M_{\text{max}}$\\
		\ \ \textbf{for} $l = 1, \cdots, L_{\text{max}}$\\
		\ \ \quad $\forall n, t: \bar{v}_{n,t}^p(l) = \sum_{k=1}^K |\hat{h}_{n,k}(l)|^2 v_{k,t}^x(l) + v_{n,k}^h(l) |\hat{x}_{k,t}^\prime(l)|^2$ & (A1)\\
		\ \ \quad $\forall n, t: \bar{p}_{n,t}(l) = \sum_{k=1}^K  \hat{h}_{n,k}(l)\hat{x}_{k,t}^\prime(l)$ & (A2)\\
		\ \ \quad $\forall n, t: v_{n,t}^p(l) = \bar{v}_{n,t}^p + \sum_{k=1}^K v_{n,k}^h(l) v_{k,t}^x(l)$ & (A3)\\
		\ \ \quad $\forall n, t: \hat{p}_{n,t}(l) = \bar{p}_{n,t}(l)  - \hat{s}_{n,t}(l-1) \bar{v}_{n,t}^p(l)$ & (A4)\\
		\ \ \quad $\forall n, t: v_{n,t}^z(l) = \frac{v_{n,t}^p(l)\sigma^2 }{v_{n,t}^p(l)+\sigma^2}$ & (A5)\\
		\ \ \quad $\forall n, t: \hat{z}_{n,t}(l) = \frac{v_{n,t}^p(l)}{v_{n,t}^p(l)+\sigma^2}(y_{n,t} - \hat{p}_{n,t}(l) ) + \hat{p}_{n,t}(l)$ & (A6)\\
		\ \ \quad $\forall n, t: v_{n,t}^s(l) = (1-v_{n,t}^z(l)/v_{n,t}^p(l))/v_{n,t}^p(l)$ & (A7)\\
		\ \ \quad $\forall n, t: \hat{s}_{n,t}(l) = (\hat{z}_{n,t}(l) - \hat{p}_{n,t}(l))/v_{n,t}^p(l)$ & (A8)\\
		\ \ \quad $\forall n, k: v_{n,k}^q(l) = \left(\sum_{t=1}^T |\hat{x}_{k,t}^\prime(l)|^2 v_{n,t}^s(l) \right)^{-1} $ & (A9) \\
		\ \ \quad $\forall n, k: \hat{q}_{n,k}(l) = \hat{h}_{n,k}(l)\Big(1-v_{n,k}^q(l) \sum_{t=1}^T v_{k,t}^x(l) v_{n,t}^s(l) \Big) + v_{n,k}^q(l) \sum_{t=1}^T (\hat{x}_{k,t}^\prime(l))^* \hat{s}_{n,t}(l) $ & (A10) \\
		\ \ \quad $\forall k, t: v_{k,t}^r(l) = \left(\sum_{n=1}^N |\hat{h}_{n,k}(l)|^2 v_{n,t}^s(l) \right)^{-1} $ & (A11) \\
		\ \ \quad $\forall k, t: \hat{r}_{k,t}(l) = \hat{x}_{k,t}^\prime(l)\Big(1-v_{k,t}^r(l) \sum_{n=1}^N v_{n,k}^h(l) v_{n,t}^s(l) \Big) + v_{k,t}^r(l) \sum_{n=1}^N \hat{h}_{n,k}^*(l) \hat{s}_{n,t}(l) $ & (A12) \\
		\ \ \quad $\forall n, k: \hat{h}_{n,k}(l+1) = \mathsf{E}[h_{n,k}|\hat{q}_{n,k}(l), v_{n,k}^q(l) ]$ & (A13) \\
		\ \ \quad $\forall n, k: v_{n,k}^h(l+1) = \mathsf{E}[|h_{n,k}-\hat{h}_{n,k}(l+1)|^2|\hat{q}_{n,k}(l), v_{n,k}^q(l)] $ & (A14) \\
		\ \ \quad $\forall k, t: \hat{x}_{k,t}^\prime(l+1) = \mathsf{E}[x_{k,t}^\prime|\hat{r}_{k,t}(l), v_{k,t}^r(l) ]$ & (A15) \\
		\ \ \quad $\forall k, t: v_{k,t}^x(l+1) = \mathsf{E}[|x_{k,t}^\prime-\hat{x}_{k,t}^\prime(l+1)|^2|\hat{r}_{k,t}(l), v_{k,t}^r(l) ] $ & (A16) \\
		\ \ \quad \textbf{if} $\sum_{n,t} |\bar{p}_{n,t}(l) - \bar{p}_{n,t}(l-1) |^2 < \epsilon \sum_{n,t}|\bar{p}_{n,t}(l) |^2 $, \textbf{stop}\\
		\ \ \textbf{end} \\
		\ \ $\hat{x}_{k,t}^\prime(1) = \hat{x}_{k,t}^\prime(l+1); v_{k,t}^x(1)= v_{k,t}^x(l+1); \hat{h}_{n,k}(1) = 0; v_{n,k}^h = 1$.\\
		\textbf{end} \\
	\hline
	\end{tabular}
	\end{table}

	The BiG-AMP algorithm is given in Table \ref{tab:algorithm}. The details of the algorithm are explained based on the factor graph in Fig. \ref{Fig:FactorGraph} as follows. In (A1)-(A2) of Table \ref{tab:algorithm}, messages from variable nodes $\{z_{n,k,t}\}$ to check nodes $\{\mathtt{c}_{n,t} \}$ are cumulated to obtain an estimate of $\mathbf{HX}^\prime$ with means $\{\bar{p}_{n,t}(l)\}$ and variances $\{\bar{v}_{n,t}^p(l)\}$. In (A3)-(A4), ``Onsager'' correction is applied to generate means $\{\hat{p}_{n,t}(l)\}$ and variances $\{v_{n,t}^p(l)\}$. More discussions on ``Onsager'' terms can be found in \cite{Donoho10}. In (A5)-(A6), the means $\{\hat{z}_{n,t}(l)\}$ and variances $\{v_{n,t}^z(l)\}$ are calculated based on $\{\hat{p}_{n,t}(l)\}$, $\{v_{n,t}^p(l)\}$, and the channel observations $\{y_{n,t}^\prime\}$. In (A7)-(A8), the scaled residual $\{\hat{s}_{n,t}(l)\}$ and a set of inverse-residual-variances $\{v_{n,t}^s(l)\}$ are computed. Then, each pair of $(\hat{s}_{n,t}(l), v_{n,t}^s(l))$ is sent from check node $\mathtt{c}_{n,t}$ to variable nodes $\{z_{n,k,t}|\forall k\}$ and then passed to the check nodes $\{\mathtt{c}_{n,k,t}|\forall k \}$.  In (A9)-(A10), messages from check nodes $\{\mathtt{c}_{n,k,t}\}$ to variable node $h_{n,k}$ are combined to compute an estimate of $h_{n,k}$, denoted by $\hat{q}_{n,k}(l)$, and the corresponding variance $v_{n,k}^q(l)$. Then, each pair of $(\hat{q}_{n,k}(l), v_{n,k}^q(l))$ is merged with the prior distribution $p_{h_{n,k}}(h_{n,k})$ to produce the posterior mean $\hat{h}_{n,k}(l+1)$ in (A13) and the variance $v_{n,k}^h(l+1)$ in (A14). A similar process is performed for $\{x_{k,t}^\prime\}$ in (A15)-(A16). Note that though not included in Table \ref{tab:algorithm}, damping is required to guarantee the convergence of BiG-AMP. We refer the interested readers to \cite{PSchniter14} for details.

\subsection{Ambiguity Elimination}
	With the BiG-AMP output $\hat{\mathbf{X}}^\prime$, we obtain an estimate of $\mathbf{X}$ as $\hat{\mathbf{X}} = \hat{\mathbf{X}}^\prime \mathbf{V}_1^\text{H}$. As aforementioned, we still need to eliminate the intrinsic ambiguities in $\hat{\mathbf{X}}$. The details are as follows.
	
	Recall from \eqref{equ:x_hat_model1} that $\hat{\mathbf{X}}$ can be modelled as $\hat{\mathbf{X}} = \mathbf{\Sigma} \mathbf{\Pi}  (\mathbf{X}  + \mathbf{\Delta})$. The permutation ambiguity $\mathbf{\Pi}$ can be resolved by inserting a label of $\lceil\log K \rceil$ bits for user identification. The phase ambiguity $\mathbf{\Sigma}$ can be resolved by using one element of each $\mathbf{x}_k$ to estimate $\Sigma_k$. To be specific, we assume without loss of generality that $\{x_{k,1}, k\in \mathcal{I}_K\}$ are known to the receiver. Recall that $\hat{x}_{k,1}$ can be expressed by
	\begin{equation}
		\hat{x}_{k,1} = \Sigma_k(x_{k,1}+\delta_{\pi(k),1} ).
	\end{equation}
	\begin{table}[!t]
	\caption{Algorithm 2: The P-BiG-AMP Algorithm}
	\label{tab:P-BiG-AMP}
	\centering
	\begin{tabular}{l r}
	\hline
		\textbf{Input}: Received signal $\mathbf{Y}$, prior distributions $p_{\mathbf{H}}(\mathbf{H}) $ and $p_{\mathbf{X}}(\mathbf{X})$.\\
		step 1: Perform SVD on $\mathbf{Y}$ to estimate the signal subspace: $\mathbf{Y} = \mathbf{UDV}^\text{H}$ with $\mathbf{V} = [\mathbf{V}_1, \mathbf{V}_2]$;\\
		step 2: Project $\mathbf{Y}$ onto $\mathbf{V}_1$: $\mathbf{Y}^\prime = \mathbf{Y}\mathbf{V}_1$;\\
		step 3: Factorize $\mathbf{Y}^\prime$ to obtain $(\hat{\mathbf{H}},\hat{\mathbf{X}}^\prime)$ based on the BiG-AMP algorithm;\\
		step 4: Calculate $\hat{\mathbf{X}} = \hat{\mathbf{X}}^\prime \mathbf{V}_1^\text{H}$;\\
		step 5: Calculate $\hat{\Sigma}_k = \frac{x_{k,1}}{|x_{k,1}|^2+\sigma_{\delta_{\pi(k)}}^2 } \hat{x}_{k,1}, k \in \mathcal{I}_K$;\\
		step 6: Output the estimate of the $k$th column of $\mathbf{X}$ as $\bar{\mathbf{x}}_k = \hat{\Sigma}_k^{-1} \hat{\mathbf{x}}_k, k\in \mathcal{I}_K$.\\
	\hline
	\end{tabular}
	\end{table}
	Then, an estimate of $\Sigma_k$ is given by
	\begin{equation}
		\hat{\Sigma}_k = \frac{x_{k,1}}{|x_{k,1}|^2+\sigma_{\delta_{\pi(k)}}^2 } \hat{x}_{k,1},
	\end{equation}
	where $\sigma_{\delta_k}^2$ is the average variance of $\pmb{\delta}_k$. Denote by $\tilde{\mathbf{x}}_k = [x_{k,2},\cdots, x_{k,T}]^\text{T}$ and $\hat{\tilde{\mathbf{x}}}_k = [\hat{x}_{k,2},\cdots, \hat{x}_{k,T}]^\text{T}$. Then, the estimate of each $\tilde{\mathbf{x}}_k$ is given by
	\begin{equation}
		\bar{\mathbf{x}}_k = \hat{\Sigma}_k^{-1} \hat{\tilde{\mathbf{x}}}_k.
	\end{equation}
	The overall P-BiG-AMP algorithm is summarized in Table \ref{tab:P-BiG-AMP}. The corresponding normalized mean-square-error (MSE) of $\mathbf{X}$ is given by
	\begin{equation}
		\mathsf{MSE}_{\mathbf{X}} = \frac{1}{K}\sum_{k=1}^K \frac{\mathsf{E}[\|\tilde{\mathbf{x}}_k -\bar{\mathbf{x}}_k \|_2^2]}{\mathsf{E}[\|\tilde{\mathbf{x}}_k\|_2^2]}.\label{equ:NMSEx}
	\end{equation}
	Likewise, an achievable rate of the massive MIMO system is given by 
	\begin{equation}
		\mathsf{R}_{\text{blind}} = \sum_{k=1}^K \left(1-\frac{1}{T}\right) \log (1+ \mathsf{SNR}_{k,\text{out}}) -  \frac{K\lceil\log K \rceil }{T},
	\end{equation}
	where the output SNR for each transmit terminal $k$ is given by 
	\begin{equation}
		\mathsf{SNR}_{k,\text{out}} = \frac{\mathsf{E}[\|\tilde{\mathbf{x}}_k\|_2^2]}{\mathsf{E}[\|\tilde{\mathbf{x}}_k -\bar{\mathbf{x}}_k \|_2^2]},
	\end{equation}
	and the rate loss $\frac{K\lceil\log K \rceil }{T}$ is caused by the permutation ambiguity.
		
\section{Numerical Results}
	In simulation, we set $\alpha_k = 1/K, \forall k\in \mathcal{I}_K$, and $P = K$, i.e., each element of $\mathbf{X}$ is independently and identically drawn from the standard Gaussian distribution. The SNR is given by $\frac{K}{\sigma^2}$. Following \cite{Selen07}, we assume that the channel matrix $\mathbf{H}$ in the angular domain follows the Bernoulli-Gaussian (BG) distribution
	\begin{equation}
		p_{h_{n,k}}(h) = (1-\rho)\delta(h) + \rho \mathcal{CN}(h; 0, 1), \text{ for all } n \in \mathcal{I}_N, k\in \mathcal{I}_K.
	\end{equation} 
	For the BiG-AMP algorithm, the maximum number of inner iteration $L_\text{max} = 100$, and the maximum number of outer iteration $M_\text{max} = 10$.
	
	We first consider the normalized MSE of both $\mathbf{X}$ and $\mathbf{H}$ of the proposed blind detection in Fig. \ref{Fig:MSE} with $N = 64, 128$, and $256$, $K = 8$, $T = 50$, and $\rho = 0.2$. The normalized MSE of $\mathbf{X}$ is defined in \eqref{equ:NMSEx} and the normalized MSE of $\mathbf{H}$ is defined as $\frac{\|\mathbf{H}-\hat{\mathbf{H}}\|_F^2}{\|\mathbf{H}\|_F^2}$ for any estimate $\hat{\mathbf{H}}$. Clearly, the normalized MSE of both $\mathbf{X}$ and $\mathbf{H}$ decreases as $N$ increases. From Fig. \ref{Fig:MSE}, we see that our proposed blind detection scheme works well under practical antenna setups. 
	
	\begin{figure}[!t]
		\centering
		\includegraphics[width = 0.7\textwidth]{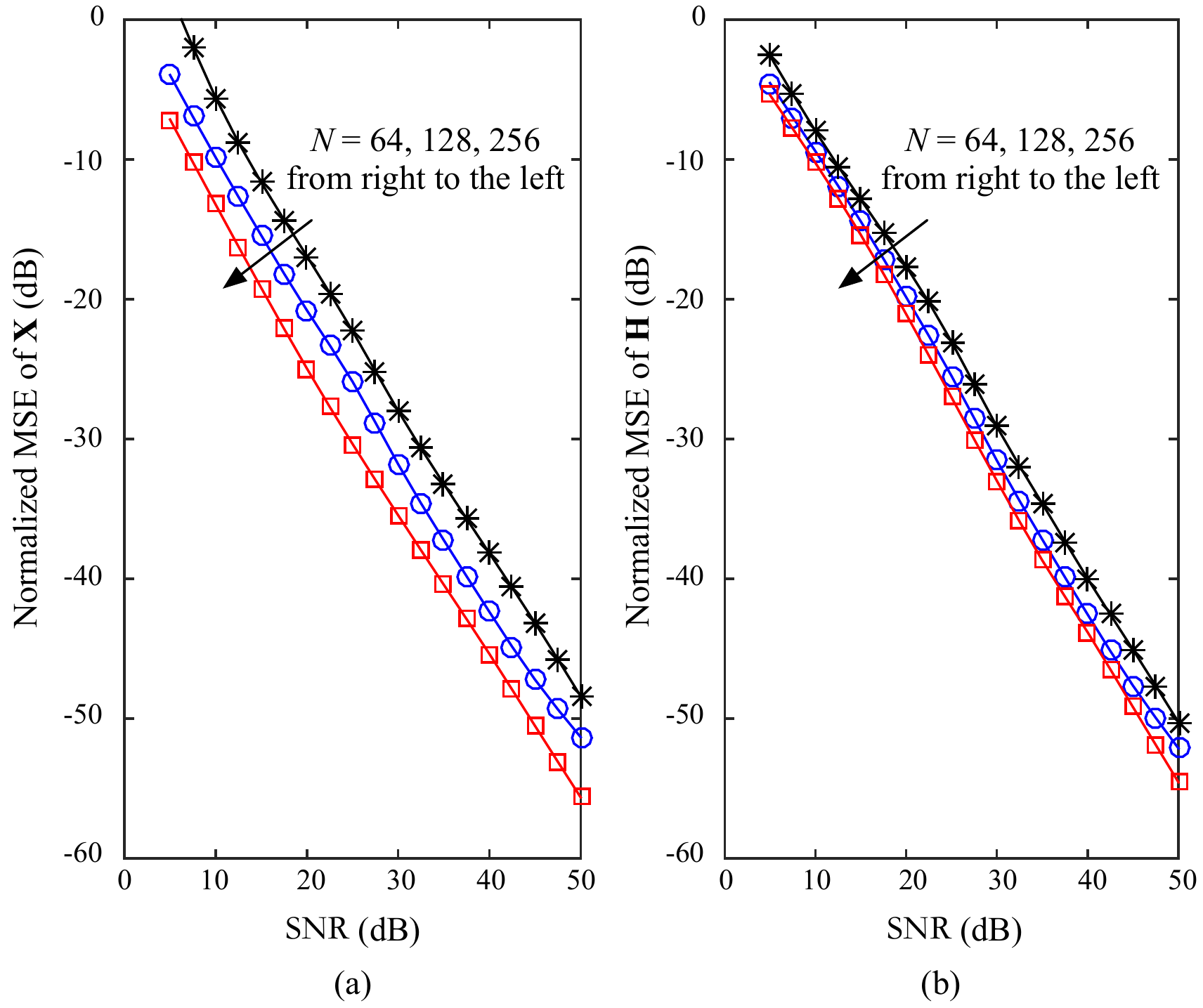}
	 	\caption{Normalized MSE of the proposed blind detection versus SNR with $N = 64, 128$, and $256$, $K = 8$, $T = 50$, and $\rho = 0.2$. The normalized MSE of $\mathbf{X}$ is shown in (a) and the normalized MSE of $\mathbf{H}$ is given in (b). }\label{Fig:MSE}
	\end{figure}

	\begin{figure}[!t]
		\centering
		\includegraphics[width = 0.7\textwidth]{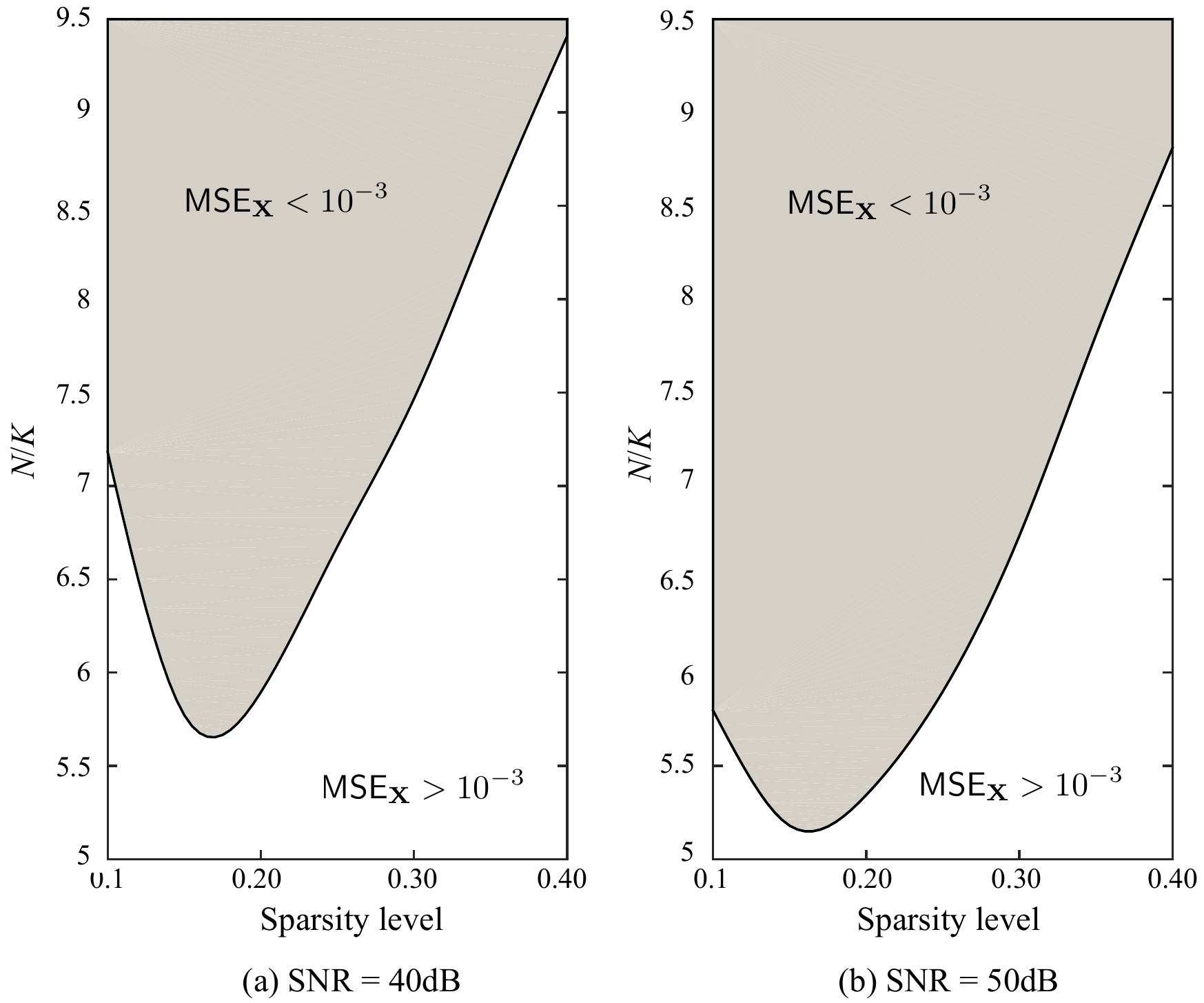}
	 	\caption{The phase transition of blind detection with sparsity level $\rho$ and the ratio $N/K$. The other system parameters are set as $K  = 20, T = 100$, SNR = 40dB in (a) and SNR = 50dB in (b).}\label{Fig:FeasibleRegion}
	\end{figure}
	
	As stated in Theorem \ref{theorem:DoF}, the blind signal detection scheme can achieve a DoF close to the ideal one with perfect channel knowledge, provided that $N$ is sufficiently large and $\rho$ is sufficiently small. However, the requirements on $N$ and $\rho$ to ensure successful detection in Theorem \ref{theorem:DoF} are given in the form of asymptotic functions of $K$ and $T$, and are difficult to evaluate for finite values of $K$ and $T$. Here, we examine through numerical simulations the tradeoff between $N$ and $\rho$ to guarantee detection success when $K$ and $T$ are finite. In particular, we say that the detection is successful and the corresponding values of $N$ and $\rho$ are feasible if $\mathsf{MSE}_{\mathbf{X}} < 10^{-3}$. Fig. \ref{Fig:FeasibleRegion} shows the feasible region of $(\rho, N/K)$ for $K = 20$ and $T = 100$. The SNR is set to 40dB in the left subgraph and 50dB in the right, and the simulation is taken over $10^4$ channel realizations. We see that the feasible region becomes larger as the SNR increases. This is reasonable since a higher SNR implies a better channel quality, and hence a less stringent requirement on $N$ and $\rho$ to guarantee successful detection. We also see that the boundary curve is not a monotonic function of the sparsity level $\rho$. Intuitively, the smaller the sparsity level $\rho$, the more sparse the channel, the less the channel randomness, and so the less stringent the requirement on $N$ to achieve successful detection. However, when $\rho$ is very close to zero, the probability of sparsity pattern collision increases, leading to an increase in the required $N$ to guarantee successful detection.
	
	\begin{figure}[!t]
		\centering
		\includegraphics[width = 0.7\textwidth]{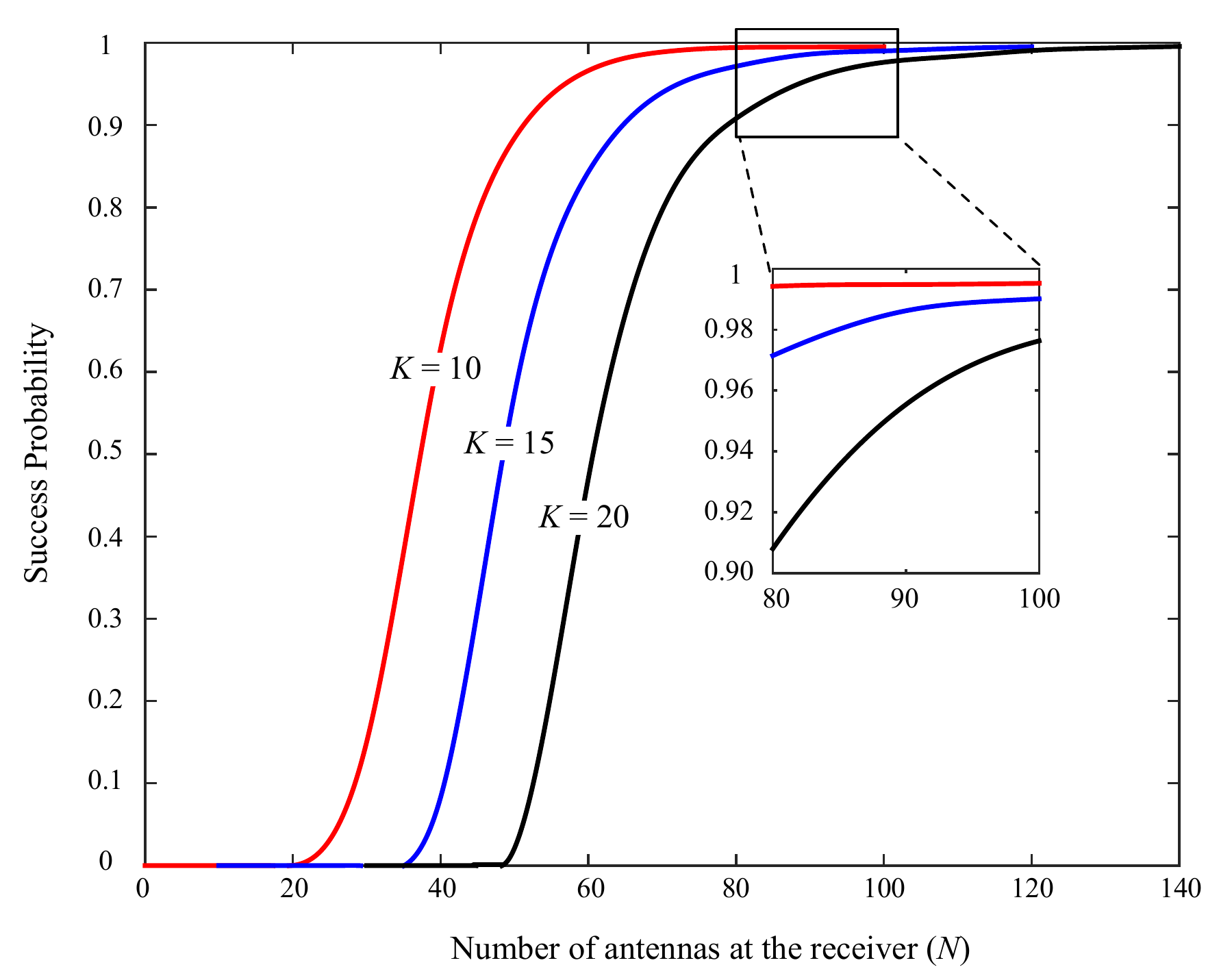}
	 	\caption{The success probability of blind detection versus the number of antennas $N$ at the receiver with the number of transmit terminals $K = 10, 15, \text{ and } 20$, respectively. The other system settings are $T = 100, \rho = 0.2$, and SNR = 40dB.}\label{Fig:SuccessProb}
	\end{figure}
	
	Fig. \ref{Fig:SuccessProb} plots the success probability of our proposed blind detection against the number of receive terminals $N$ with $K = 10, 15,$ and $20$, respectively. The other system settings are $T = 100, \rho = 0.2$, and SNR = 40dB. The success probability is calculated based on $10^5$ channel realizations. From Fig. \ref{Fig:SuccessProb}, we see that the success probability increases with $N$, and approaches one for a sufficiently large $N$. This is because the probability of sparsity pattern collision is non-zero for finite values of $N, K, T, \rho$, and vanishes as $N$ goes to infinity, as stated in Theorem \ref{theorem:DoF}. 

	\begin{figure}[!t]
		\centering
		\includegraphics[width = 0.7\textwidth]{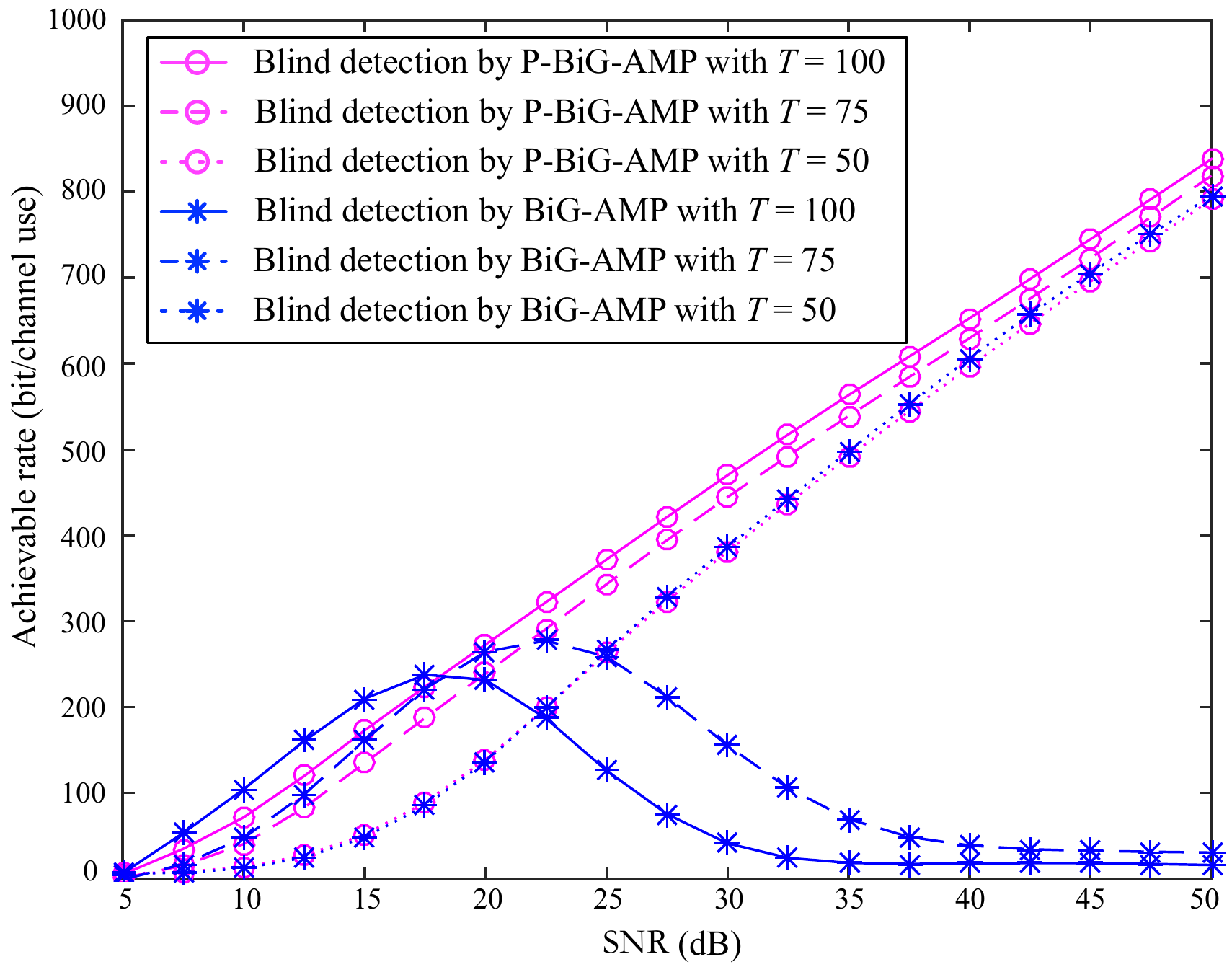}
	 	\caption{The performance of the proposed blind detection scheme with various $T$. The system settings are given by $N = 500, K  = 50, \text{ and } \rho = 0.3$.}\label{Fig:PerfCompvsT}
	\end{figure}
	
	Fig. \ref{Fig:PerfCompvsT} shows the achievable rate of the blind detection scheme with the proposed P-BiG-AMP scheme for different values of $T$. The coherence time slot $T$ is set to 50, 75, and 100. Moreover, $N = 500, K = 50,$ and $\rho = 0.3$. The simulation is taken over 200 channel realizations. From Fig. \ref{Fig:PerfCompvsT}, we see that the achievable rate of P-BiG-AMP monotonically increases in both $T$ and SNR. For comparison, the achievable rate of the blind detection scheme with BiG-AMP in \cite{PSchniter14} is also included. We see that the achievable rate of BiG-AMP monotonically increases in both $T$ and SNR in the relatively low SNR regime, while for $T > K$ (i.e., $T = 75$ and $100$), the performance of BiG-AMP deteriorates as SNR further increases. This demonstrates the advantage of the proposed P-BiG-AMP algorithm.
	
	\begin{figure}[!t]
		\centering
		\includegraphics[width = 0.7\textwidth]{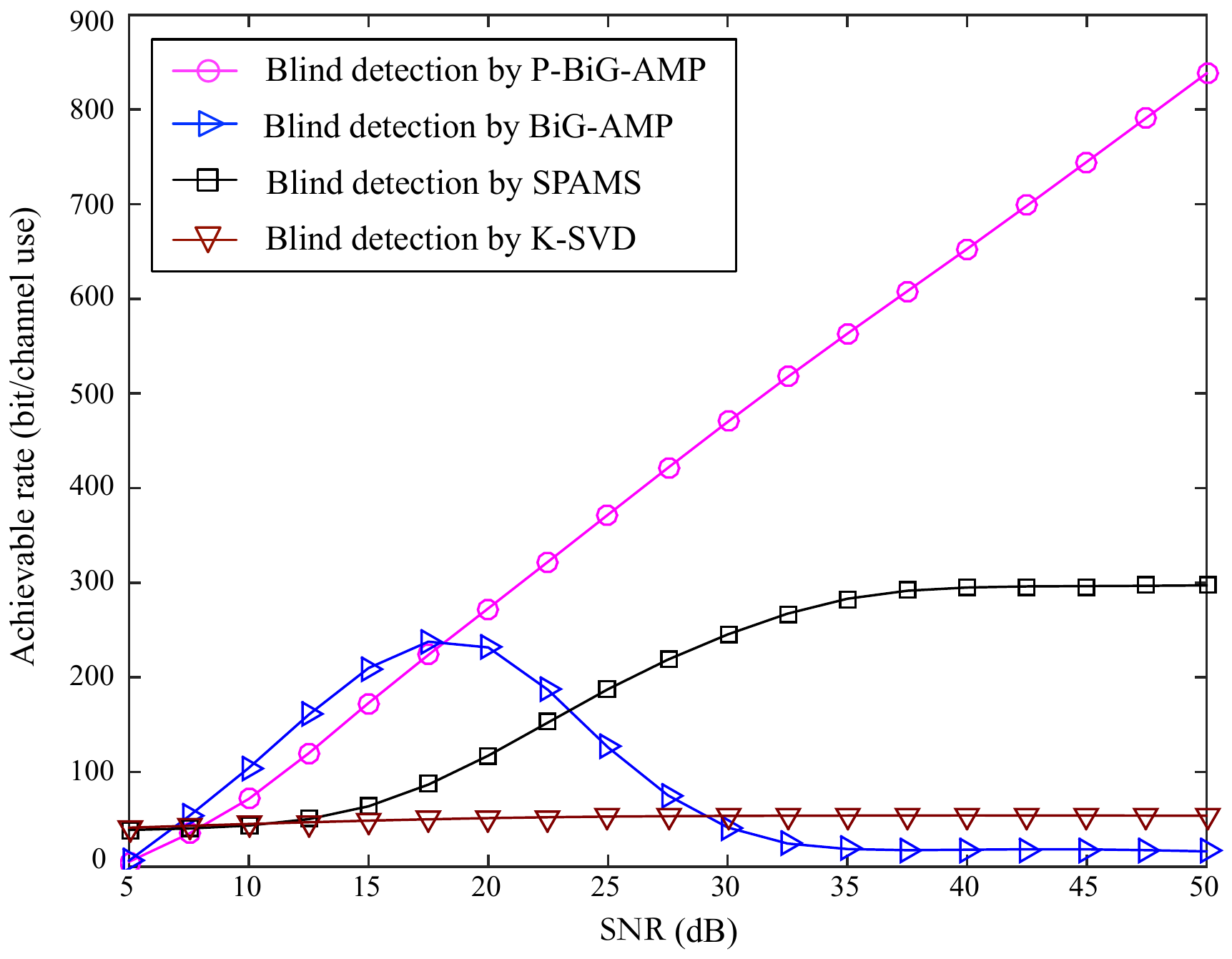}
	 	\caption{The performance of the blind detection scheme with various detection algorithms. The system settings are given by $N = 500, K  = 50, \text{ and } \rho = 0.3$.}\label{Fig:DetAlgComp}
	\end{figure}
	
	In Fig. \ref{Fig:DetAlgComp}, we compare the achievable rate of the blind detection with various dictionary learning algorithms, including the K-SVD algorithm \cite{KSVD06}, the SPAMS algorithm \cite{SPAMS10}, the BiG-AMP algorithm\cite{PSchniter14}, and the P-BiG-AMP algorithm. The system settings are $N = 500, K = 50, T = 100,$ and $\rho = 0.3$. The maximum iteration number for K-SVD is 100, and that for SPAMS is 1000. We see that the P-BiG-AMP algorithm significantly outperforms its counterparts especially in the high SNR regime.
	
	We are now ready to compare the achievable rate of the proposed P-BiG-AMP based blind detection scheme with other approaches for massive MIMO, as listed below.
	\begin{enumerate}
  		\item [i).] Training-based MIMO coherence detection \cite{Yuan16,Hassibi03}: Each transmission frame consists of two phases, namely, the training phase for channel estimation based on the pilot signals and the data transmission phase for data detection based on the estimated channel and the received signals. Channel sparsity is not taken into consideration.
  		\item [ii).] Blind detection without exploiting channel sparsity \cite{Zheng02}: No pilots are required and the data is detected by sphere packing in the Grassmann manifold. 
  		\item [iii).] Compressed-sensing based MIMO coherence detection \cite{Bajwa10}: Channel sparsity is exploited to reduce the number of required pilots. Compressed sensing algorithms are used to estimate channel coefficients based on pilots. In our simulation, approximate message passing (AMP) algorithm \cite{Donoho10} is used to solve the compressed sensing problem.
  		\item [iv).] Blind channel estimation based on channel sparsity \cite{Mezghani16}: $\mathbf{H}$ is estimated based on the covariance of $\mathbf{Y}$ and the channel sparsity, and then $\mathbf{X}$ is detected based on the estimated channel. The channel estimation and the data detection are separated from each other.
  		\item [v).] Blind detection by exploiting channel sparsity: We jointly estimate the channel and the data from the received signal based on the P-BiG-AMP algorithm.
	\end{enumerate}
	\begin{figure}[!t]
		\centering
		\includegraphics[width = 0.7\textwidth]{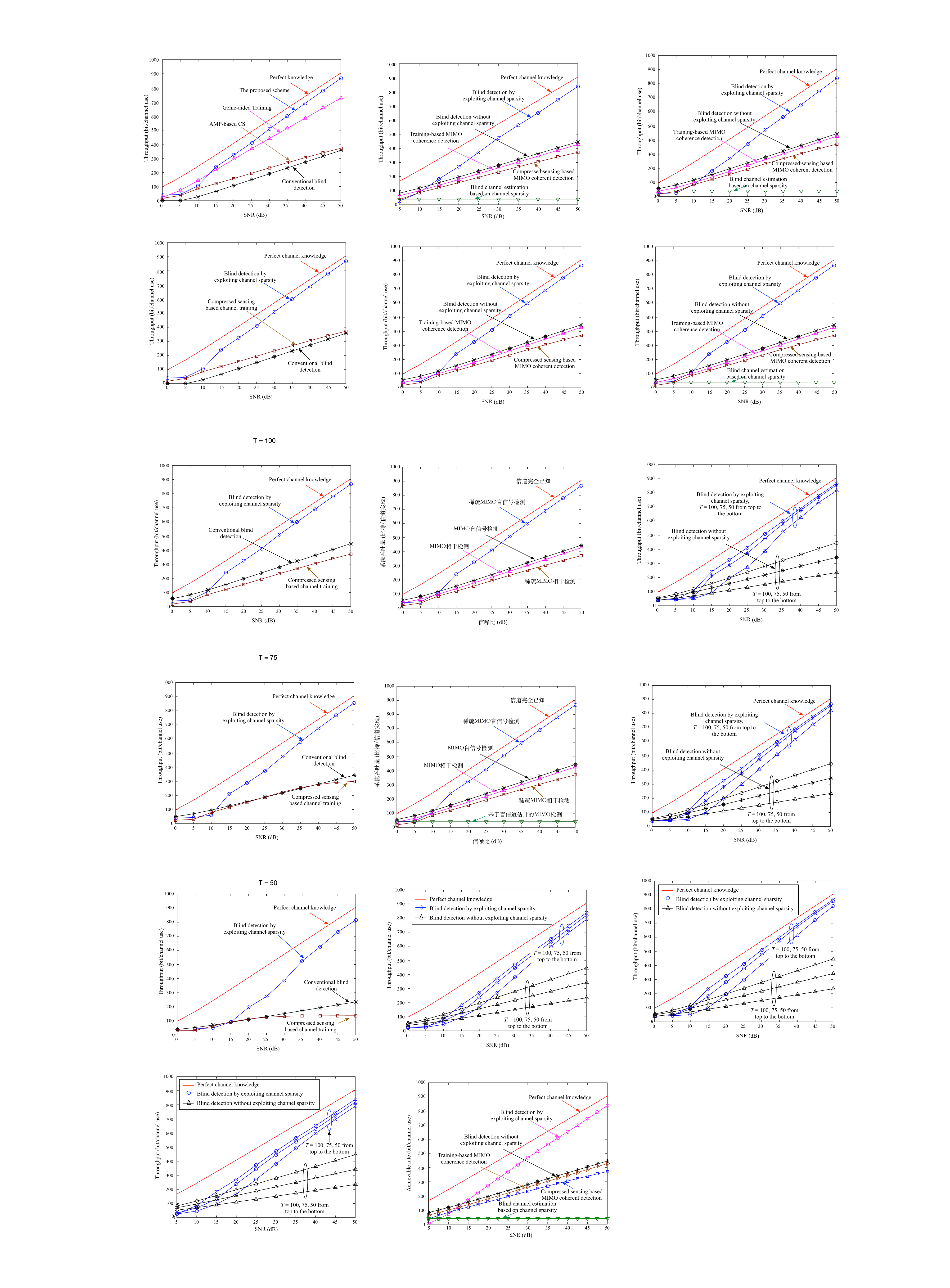}
	 	\caption{Performance comparison among various schemes with $N = 500, K = 50, T = 100$, and $ \rho = 0.3$. }\label{Fig:PerfComp}
	\end{figure}
	
	\begin{figure}[!h]
		\centering
		\includegraphics[width = 0.7\textwidth]{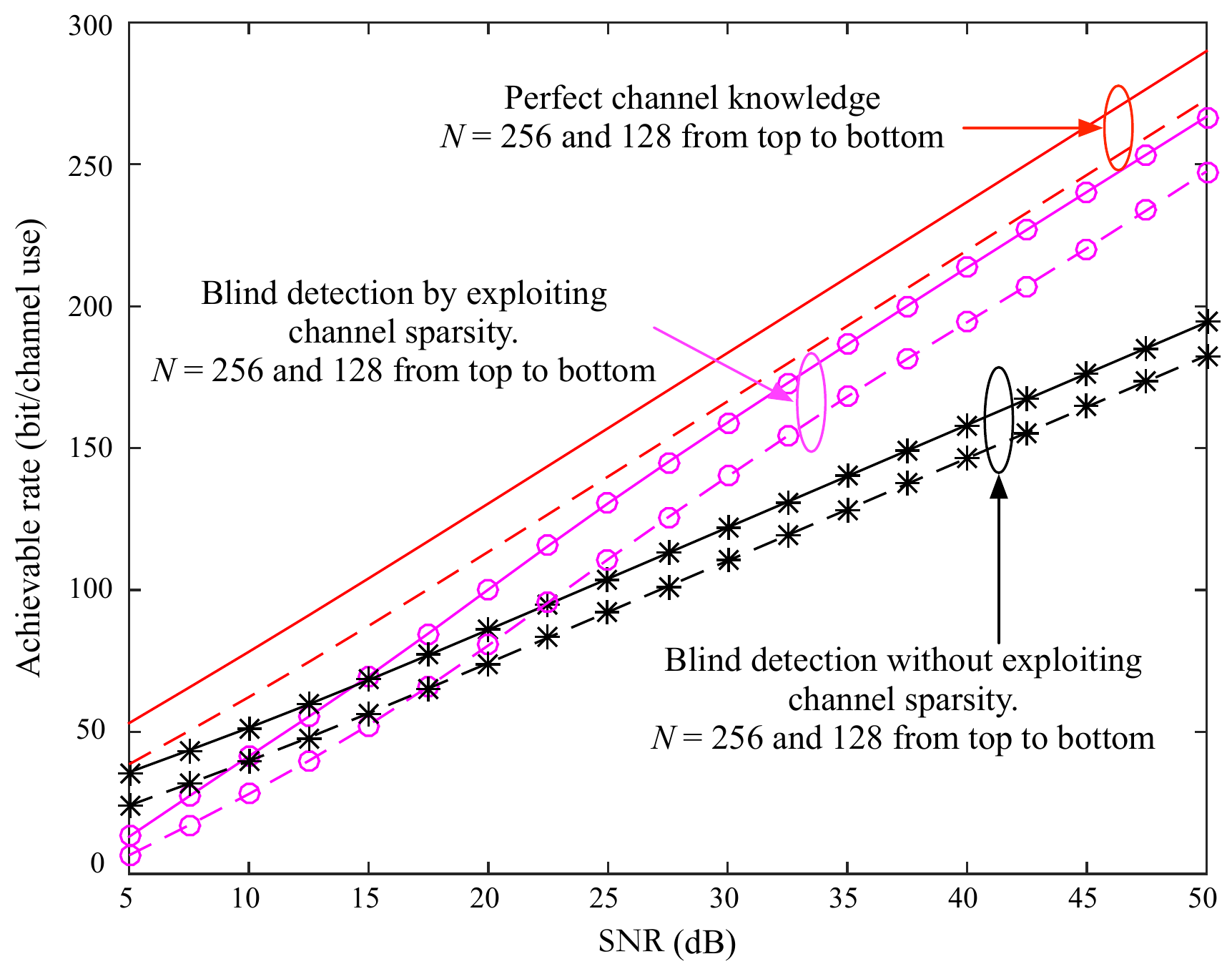}
	 	\caption{Performance comparison among various schemes with $N = 128$ and $256$, $K = 16$, $T = 50$, and $\rho = 0.2$.}\label{Fig:Perf2}
	\end{figure}
	
	Fig. \ref{Fig:PerfComp} gives the performance comparison of the above mentioned schemes. The ideal case with perfect channel knowledge at the receiver is also included for comparison. In Fig. \ref{Fig:PerfComp}, we set $N = 500, K = 50, T = 100$, and $\rho = 0.3$. The simulation results are taken over $200$ channel samples. From Fig. \ref{Fig:PerfComp}, we see that the proposed blind detection scheme significantly outperforms the other existing schemes, and performs close to the ideal case with perfect channel knowledge in the high SNR regime. This is in agreement with the DoF result in Section IV. We note that the achievable rate of the blind channel estimation scheme does not increase with SNR. The reason is that $T=100$ is too small and the covariance matrix of $\mathbf{Y}$ is poorly estimated.
	
	Fig. \ref{Fig:Perf2} shows the simulation results for a more practical antenna configuration of $N = 128$ and $256$. For comparison, the performance of the ideal case with perfect channel knowledge and the blind detection scheme without exploiting channel sparsity in \cite{Zheng02} are also included. From Fig. \ref{Fig:Perf2}, we see that the proposed blind detection scheme still performs close to the ideal case and outperforms the blind detection scheme without exploiting channel sparsity with practical $N$.

\section{Conclusions}
	In this paper, we investigated the impact of the channel sparsity on the fundamental performance limit of the massive MIMO system. We developed a novel blind detection scheme to efficiently exploit the channel sparsity inherent in massive MIMO systems. The proposed blind detection scheme simultaneously estimates the channel and the signal by directly factorizing the received signal matrix. We showed that the proposed scheme achieves a DoF arbitrarily close to $K(1-1/T)$ provided that $N$ is sufficiently large and $\rho$ is sufficiently small. Such an achievable DoF is very close to the ideal DoF with perfect CSI at the receiver. Moreover, we proposed the P-BiG-AMP algorithm to study the performance of the blind detection scheme in the finite SNR regime. Numerical results show that, in the medium to high SNR regime, the proposed scheme achieves a much higher throughput than the counterpart schemes under various system configurations.
	 

\appendices

\section*{Acknowledgment}
	The authors would like to thank Prof. Yang Yang for insightful discussions and the anonymous reviewers for their constructive comments to significantly improve the presentation of this paper.

\ifCLASSOPTIONcaptionsoff
  \newpage
\fi


%

\end{document}